\documentclass[a4paper, 12pt]{amsart}
\usepackage{amsmath, amsthm, amscd, amssymb, amsfonts, amsxtra, amssymb, latexsym, bm}
\usepackage{enumerate}
\usepackage{multirow}
\usepackage{verbatim}
\usepackage{graphicx,epsfig,tikz}
\usepackage{float}
\usepackage{hyperref}
\hypersetup{colorlinks = true,	allcolors  = blue}

\newcommand{\Z}{\mathbb{Z}}
\newcommand{\N}{\mathbb{N}}
\newcommand{\ff}{\mathbb{F}}
\newcommand{\Tr}{\operatorname{Tr}}

\newcommand{\CC}{\mathcal{C}}

\newcommand{\G}{\Gamma}
\newcommand{\Div}{{\rm Div}}

\newcommand{\sk}{\smallskip}
\newcommand{\msk}{\medskip}

\newtheorem{thm}{Theorem}[section]
\newtheorem{prop}[thm]{Proposition}
\newtheorem{lem}[thm]{Lemma}
\newtheorem{coro}[thm]{Corollary}

\theoremstyle{definition}
\newtheorem{rem}[thm]{Remark}
\newtheorem{exam}[thm]{Example}
\newtheorem{defi}[thm]{Definition}

\theoremstyle{remark}

\counterwithout{table}{section}

\usepackage{color}

\begin{document} \sloppy
\numberwithin{equation}{section}
\title[Spectral relation between ICCs and GP-graphs]{The spectral relation between irreducible cyclic codes and generalized Paley graphs}
\author{Ricardo A.\@ Podest\'a, Denis E.\@ Videla}
\dedicatory{\today}
\keywords{Irreducible cyclic codes, generalized Paley graphs, weights, spectrum, SRG}
\thanks{2020 {\it Mathematics Subject Classification.} 
Primary 94B15, 05C25;\, Secondary 05C50}
\thanks{Partially supported by CONICET and SECyT-UNC}
\address{Ricardo A.\@ Podest\'a. FaMAF -- CIEM (CONICET), Universidad Nacional de C\'ordoba. 
	\newline Av.\@ Medina Allende 2144, Ciudad Universitaria, (5000), C\'ordoba, Argentina. \newline
{\it E-mail: podesta@famaf.unc.edu.ar}}
\address{Denis E.\@ Videla. FaMAF -- CIEM (CONICET), Universidad Nacional de C\'ordoba. 
\newline	Av.\@ Medina Allende 2144, Ciudad Universitaria, (5000), C\'ordoba, Argentina.
	\newline {\it E-mail: devidela@famaf.unc.edu.ar}}

\begin{abstract} 
Let $p$ be a prime and $\ff_q/\ff_r$ a finite field extension with $q=p^m$ and $r=p^s$. 
For any $k\mid q-1$, we consider $r$-ary irreducible cyclic codes (ICC) of the form 
	$\CC(k,q/r) = \{(\Tr_{q/r}(\gamma \omega^{ik})_{i=0}^{n-1})\}_{\gamma \in \ff_r}$, 
with $\omega$ a primitive element of $\ff_q$ and 
$ n= \tfrac{q-1}{k}$, and generalized Paley (GP) graphs 
	$\Gamma(k,q) = Cay(\ff_q, \{ x^k : x \in \ff_q^* \})$. 
We show that there is a simple closed formula relating the weight distribution of $\CC(k,q/r)$ with the spectrum of $\Gamma(k_r,q)$, where $k_r=\gcd(k, \frac{q-1}{r-1})$. 
Then, we give $Spec(\Gamma(k,q))$ explicitly for those graphs associated with irreducible 2-weight cyclic codes in the semiprimitive and exceptional cases.
Finally, we give the weight enumerators of irreducible cyclic codes associated with Hamming GP-graphs. 
\end{abstract}

\maketitle

\section{Introduction: preliminaries and results} \label{sec:1}
Roughly speaking, this work relates the weight distribution of irreducible cyclic codes with the spectrum of generalized Paley graphs in a simple and precise way. 
We take advantage of this to compute weight enumerators of certain irreducible cyclic codes from known spectrum of GP-graphs and conversely. 
In particular, we show that the weight distribution of irreducible cyclic codes and the spectrum of GP-graphs determine each other. 

\noindent \textit{Motivation.}
The connection between cyclic codes and graphs was first noticed more than 50 years ago. 
Several different relations can be found in the literature, in particular between 2-weight irreducible cyclic codes and certain graphs. 

First, some authors constructed 2-weight irreducible cyclic codes (ICC) from strongly regular graphs (SRG) and conversely. 
In fact, Delsarte (\cite{D}, 1972) constructed SRGs from 2-weight ICCs and established a direct relationship between the distance matrix of the code and the adjacency matrix of the graphs. 
Also, van Lint and Schrijver (\cite{vLSch}, 1982) gave another construction of SRGs 
from cosets of mulplicative subgroups of $\ff_q$. Moreover, Calderbank and Kantor (1987, \cite{CK}) showed that any projective 2-weight ICC can be obtained from an SRG. 
Later, Haemers, Peeters and van Rijckevorsel (\cite{HPvR}, 1999) constructed binary linear codes from the row-span of the adjacency matrix of a given regular graph. In particular, they showed that the code corresponding to the Paley graph is the quadratic residue code. 

From the early 2000's on, some authors constructed linear codes with good decoding properties (PD-sets) from the row-span of the incidence matrix (\cite{GK}, 2003) and from adjacency matrix of Paley graphs (\cite{KL}, 2004). This result was later extended by considering generalized Paley graphs (\cite{SL}, 2013). 
Finally, in 2013, the spectrum of cyclic codes with many arbitrary number of zeros was computed using the spectrum of Hermitian form graphs (\cite{LHFG}, \cite{ZZDX}). By using quadratic forms, Zhou et al.\@ (\cite{ZZDX}) extended the computations to other families of codes. 

We will study the weight distribution of arbitrary irreducible cyclic codes $\CC(k,q)$ by considering generalized Paley graphs $\G(k,q)$ associated to them. We will do this by giving a simple spectral relation 
between these two combinatorial objects, namely
	$$ \{\text{codewords/frequencies}\} \qquad \leftrightsquigarrow \qquad \{\text{eigenvalues/multiplicities}\}. $$ 

\subsection{The GP-graphs $\G(k,q)$}
We now introduce the family of Cayley graphs that we are interested in. 
If $G$ is an abelian group and $S$ is a subset of $G$ (the connection set) not containing $0$, the associated Cayley graph $\Gamma = Cay(G,S)$ is the directed graph (digraph) with vertex set $G$ and where two vertices $u,v$ form a directed edge from $u$ to $v$ in $\Gamma$ if and only if $v-u \in S$. If $S$ is symmetric ($S=-S$), then $Cay(G,S)$ is a simple (undirected) graph. 
Taking a finite field as the group $G$ and $k$-th powers in the field as connection set we get the GP-graphs.
\begin{defi} \label{def: GP-graphs}
Let $\ff_{q}$ denote any finite field of $q$ elements, say $q=p^m$ with $p$ prime and $m \in \N$. 
The \textit{generalized Paley graph} is the Cayley graph (\textit{GP-graph} for short) 
\begin{equation} \label{eq: Gkq}
	\G(k,q) := Cay(\ff_{q},R_{k}) \qquad \text{with } \qquad R_{k} := \{ x^{k} : x \in \ff_{q}^*\}.
\end{equation} 
That is, $\G(k,q)$ is the graph with vertex set $\ff_{q}$ and two vertices $u,v \in \ff_{q}$ are neighbors (directed edge) 
if and only if 
$v-u=x^k$ for some $x\in \ff_{q}^*$. 
\end{defi}

GP-graphs have been extensively studied in the past few years. 
For instance, Lim and Praeger studied their automorphism groups and characterized all GP-graphs which are Hamming graphs 
(\cite{LP}, 2009). 
Also, Pearce and Praeger characterized all GP-graphs which are Cartesian decomposable (\cite{PP}, 2019); some extensions to the directed case can be found in \cite{PV7}.
The clique number of $\G(k,q)$ was studied by Yip in \cite{Y1} (2021) and \cite{Y2} (2022).  
Finally, the diameter of these graphs and its relation with the Waring problem over finite fields was recently approached by us in \cite{PV6} (2021) and \cite{PV7} (2022).

The \textit{spectrum} of a graph $\G$, denoted $Spec(\G)$, is the spectrum of its adjacency matrix $A$ (i.e.,  
the set of eigenvalues of $A$ counted with multiplicities).
If $\Gamma$ has different eigenvalues $\lambda_0, \ldots, \lambda_t$ with multiplicities $m_0,\ldots,m_t$, we write 
as usual 
	$$ 
		Spec(\Gamma) = \{[\lambda_0]^{m_0}, \ldots, [\lambda_t]^{m_t}\}.
	$$
It is well-known that an $n$-regular graph $\G$ has $n$ as one of its eigenvalues, with multiplicity equal to the number of connected components of $\G$. 
That is, $\G$ is connected if and only if $n$ has multiplicity $1$. A graph $\G$ is said to be \textit{integral} if it has integral spectrum, that is if $Spec(\G) \subset \Z$.

In a recent work, we studied the spectrum of generalized Paley graphs (see \cite{PV8}), 
where we showed that the eigenvalues are exactly the Gaussian periods (see Theorem~\ref{teo: SpecGkq}). 
In that work, we also characterized the GP-graphs with integral spectrum: they are those GP-graphs $\G(k,q)$ that satisfy the arithmetic condition $k\mid \frac{q-1}{p-1}$. See also \cite{PV3}, \cite{PV8} and \cite{PV19}, for other spectral properties of these graphs.

\subsection{The irreducible cyclic codes $\CC(k,q/r)$}
We now introduce the family of cyclic codes that are relevant to us.
A \textit{linear code} of length $n$ over $\ff_r$ is a vector subspace $\CC$ of $\ff_r^n$. If $\CC$ has dimension $h$ and minimum Hamming distance $d$, we say that it is an $[n,h,d]$-code (it is customary to use $k$ for the dimension of a code, but we have reserved $k$ for the parameter of the graphs $\G(k,q)$ and the codes $\CC(k,q)$).

The weight of a codeword $c=(c_{0},\ldots,c_{n-1})$ is the number $w(c)$ of its nonzero coordinates. 
The \textit{weight distribution} of $\CC$, sometimes also called \textit{spectrum},
is the sequence $Spec(\CC) = (A_0,\ldots,A_{n})$ of frequencies 
	$$ A_{i} = \#\{c\in\mathcal{C} : w(c)=i\} $$ 
of each possible weight $i=0,\ldots,n$.
We also consider the \textit{weight enumerator} which is the integral polynomial defined by: 
\begin{equation} \label{eq: W(x)}
	W_{\CC}(x)=A_0 + A_1 x +\cdots+ A_{n}x^{n}.
\end{equation}
We will say that a code $\CC$ is an \textit{$\ell$-weight code}, if all the non-zero codewords of $\CC$ 
only take $\ell$ distinct non-zero values. 

A linear code $\CC$ is \textit{cyclic} if for every codeword $(c_{0},\ldots,c_{n-1})$ in $\CC$ the shifted codeword  
$(c_{1},\ldots,c_{n-1},c_{0})$ is also in $\CC$.  
If $\CC$ is a cyclic code of length $n$ over $\ff_{q}$, any word $(c_0,\ldots, c_{n-1})\in \CC$ has the associated polynomial
	$c_{0}+c_{1}x +c_{2}x^{2}+\cdots + c_{n_1}x^{n-1}$ in the ring $\ff_{q}[x]/(x^{n}-1)$.
In this case, the cyclic code $\CC$ corresponds to an ideal of $\ff_{q}[x]/(x^{n}-1)$ under this correspondence, say $\CC=\langle g(x) \rangle$. 
Thus, there exists a polynomial $h(x)$ such that 
\begin{equation} \label{eq: gh=x^n-1}
	g(x)h(x) = x^{n}-1.
\end{equation}	 
The polynomial $g(x)$ is the generator polynomial of $\CC$ and the reciprocal polynomial $h^\perp(x)= h(x^{-1})x^{\deg h}$ of $h(x)$ is the check polynomial of $\CC$ and is a generator polynomial of the dual code $\CC^\perp$, which is also cyclic.
The \textit{zeros} of a cyclic code are the roots of its check polynomial in its splitting field.

\subsubsection*{Irreducible cyclic codes and notations}
The code $\CC$ is called \textit{irreducible} (or \textit{minimal}) if the polynomial 
$h(x)$ in \eqref{eq: gh=x^n-1} is irreducible over $\ff_{q}$. In this case, $h(x)$ only has one zero $\alpha$ and 
	$h(x) = m_\alpha(x)$, 
the minimal polynomial of $\alpha$. We will abbreviate irreducible cyclic codes by ICCs.

We now fix some notations that will be used throughout. 
Let $p, s, m \in \N$ with $p$ prime, $s \mid m$, and consider
the prime powers 
\begin{equation} \label{eq: r,q}
	r=p^s \qquad \text{and} \qquad  q=p^m.
\end{equation}
Thus, we have the finite field extensions $ \ff_q/\,\ff_r$ and $\ff_r/\, \ff_p$, that is 
	$ \ff_p \subset \ff_r \subset \ff_q $.
Finally, choose a positive integer $k\mid q-1$ and define   
\begin{equation} \label{eq: k}
	n=\tfrac{q-1}{k}.
\end{equation}
Note that $(n,p)=1$. The integers $k$ and $n$ will play an important role in this work.

We will be concerned with the weight distribution of the following $r$-ary cyclic codes. 
\begin{defi}
For $k, q, r$ as in \eqref{eq: r,q}--\eqref{eq: k}, consider the $r$-ary cyclic code
\begin{equation} \label{eq: Ckqs}
	\mathcal{C}(k,q/r) := 
	\big \{ c_{\gamma} = \big( \Tr_{q/r}(\gamma\, \omega^{k i}) \big)_{i=0}^{n-1} : \gamma \in \ff_{q} \big \} \subset \ff_r^n
\end{equation} 
where $\omega$ is a primitive element of $\ff_{q}$ and $\Tr_{q/r}$ is the relative trace map from $\ff_{q}$ to $\ff_{r}$.
\end{defi}

\noindent 
\textit{Note}.
The code $\mathcal{C}(k,q/r)$ clearly depends on the choice of $\omega$. However, for simplicity, we remove it from the notation since the codes obtained using different primitive elements are equivalent to each other.

From the classic theory of cyclic codes, we have that these are irreducible cyclic codes over $\ff_r$ with zero $\alpha=\omega^{-k}$, of length $n$ and dimension $h$ given by
\begin{equation} \label{eq: h}
	h = ord_{n}(r)
\end{equation}	
(i.e., the least integer $t$ such that $r^t \equiv 1 \! \pmod n$). 
That is, $\mathcal{C}(k,q/r)$ is an $[n,h]$ $r$-ary code. 
In the $p$-ary case (i.e., when $s=1$), we will denote $\CC(k,q/p)$ simply by $\CC(k,q)$. 

\subsection{Outline and results}
Briefly, we relate the spectrum of generalized Paley graphs $\G(k,q)$ with the weight distribution of the irreducible cyclic codes $\CC(k,q)$, obtaining new results for codes from graphs and viceversa.
 
Namely, in Section \ref{sec: GPs} we recall the basic properties of GP-graphs and their spectrum in terms of Gaussian periods.
In Section  \ref{sec: ICCs} we study different representations of $r$-ary ICCs and set the notations that will be used throughout the paper. 
Next, in Sections \ref{sec: spectral relation} we give the spectral relation between the ICCs and integral connected GP-graphs. This spectral correspondence between codes and graphs is not a bijection, and it is the topic of Section \ref{sec: spectral correspondence}, where we also look at weight enumerators.
In Section \ref{sec: few weights} we interpret few weight ICCs in terms of GP-graphs and obtain some new spectra of strongly regular GP-graphs from 2-weight irreducible cyclic codes.
Finally, in Section \ref{sec: Hamming} we give explicit weight enumerators for ICCs which correspond to Hamming GP-graphs.

We next give a more detailed summary of the results of the paper. 
In Section \ref{sec: ICCs}, we first show that we can always consider 
$\CC(k,q/r)$ with $n$, $q$ and $k$ satisfying the conditions \eqref{eq: r,q} and \eqref{eq: k} with $q=r^{h}$, where $h$ is as in \eqref{eq: h}. This is the content of Theorem \ref{teo: Ck Ck'}.

Section \ref{sec: spectral relation} is crucial.
In fact, we show that there is a very clear spectral relation between integral connected GP-graphs and irreducible cyclic codes. More precisely, in Theorem \ref{teo: rel Ckq Gkq} (one of the main results of the paper) 
we prove that the weight distribution of $\CC(k,q/r)$ and the spectrum of the integral connected graph $\G(k_r,q)$, where 
	$k_r = \gcd(k,\tfrac{q-1}{r-1})$, 
determine each other.
Namely, we show that the weight $w(c_\gamma)$ of the codeword $c_\gamma$ of $\CC(k,q/r)$ and the eigenvalue $\lambda_\gamma$ of $\G(k_r,q)$ 
satisfy the simple expression 
	$$ \lambda_{\gamma} = \tfrac{q-1}{k_r} -\tfrac{k r}{k_r(r-1)} w(c_{\gamma}) $$ 
or equivalently 
	$w(c_{\gamma}) = \frac{r-1}{r} (n - \frac{k_r}{k} \lambda_{\gamma})$.
Moreover, the frequencies and the multiplicities also coincide, i.e., $A_{w(c_\gamma)}= m(\lambda_\gamma)$. 
In Corollary \ref{coro: p-ary} we give this spectral relation in the $p$-ary case, where $\CC(k,q)$ corresponds to $\G(k,q)$.

Then, in Section \ref{sec: spectral correspondence} we study the relation between the weight enumerators of ICCs defined over the same finite field of the same dimension but of different lengths, obtaining a reduction formula. In Proposition \ref{prop: WC WC'} we show that 
if $\CC=\CC(k,q/r)$ and $\CC'=\CC(k',q/r)$ are ICCs of the same dimension, then 
	$$ W_\CC(x^k) = W_{\CC'}(x^{k'}) .$$
Also, this turns up to be an equivalence relation for $r$-ary cyclic codes. 

Next, in Section \ref{sec: few weights}, we use Theorem \ref{teo: rel Ckq Gkq} to translate the notion of being a few weight irreducible cyclic code in terms of GP-graphs. 
A neat correspondence is only possible for $1$-weight and $2$-weight codes. In Proposition \ref{prop: 2-weights} we show that the 1-weight ICCs correspond to the complete graph and that 2-weight ICCs correspond with semiprimitive GP-graphs. 
Schmidt and White conjectured 
that all 2-weight ICCs are of three classes: subfield subcodes, semiprimitive and exceptional. In Theorem~\ref{teo: excep codes} we give the spectrum of the eleven GP-graphs associated to the eleven exceptional ICCs.

Finally, in Section \ref{sec: Hamming},
we define Hamming ICC as those spectrally related to the Hamming GP-graphs. 
In this way, using the spectral relation between ICCs and GP-graphs and the fact that the spectrum of Hamming GP-graphs is known, in Theorem~\ref{teo Hamming} we obtain the weight distributions for some new $r$-ary irreducible cyclic codes.

\section{Preliminaries on GP-graphs and their spectrum} \label{sec: GPs}
In this short preliminary section, we recall the basic properties of GP-graphs and the characterization of their spectrum in terms of Gaussian periods.

Recall from the introduction, that the \textit{generalized Paley graph} is the Cayley graph  
	$$ \G(k,q) = Cay(\ff_{q},R_{k}) \qquad \text{with } \qquad R_{k} = \{ x^{k} : x \in \ff_{q}^*\}, $$
where $\ff_{q}$ is a finite field of $q$ elements, say $q=p^m$ with $p$ prime and $m \in \N$. 
That is, $\G(k,q)$ is the graph with vertex set $\ff_{q}$ and two vertices $u,v \in \ff_{q}$ are neighbors (directed edge) 
if and only if 
$v-u=x^k$ for some $x\in \ff_{q}^*$. 
The graphs $\Gamma(k,q)$ are also denoted $GP(q,\frac{q-1}k)$ in \cite{LP} (2009). 
However, we recently realized that 20 years before, Cohen in Section 4 of \cite{Co} (1988) defined GP-graphs, using the notation $G(q,k)$ instead of our $\G(k,q)$ and requiring $k\mid \frac{q-1}2$, which results in his graphs being undirected.
If $\omega$ is a primitive element of $\ff_{q}$, then $R_{k} = \langle \omega^{k} \rangle = \langle \omega^{(k,q-1)} \rangle$. This implies that 
	$$ \G(k,q)= \G(k',q) \qquad \text{with} \qquad k'=\gcd(k,q-1)$$
and that it is a $\frac{q-1}{\gcd(k,q-1)}$-regular graph. 
Thus, we can (and we do) always assume that $k \mid q-1$ . 
We have the following equivalences for GP-graphs: 
\begin{itemize}
	\item $\G(k,q)$ is undirected if and only if $q$ is even or else $q$ is odd and $k \mid \tfrac{q-1}2$. \sk 
	
	\item $\G(k,q)$ is connected if and only if $ord_{n}(p)=m$ where $n=\tfrac{q-1}{k}$.
\end{itemize}

\subsubsection*{Gaussian periods} 
We now recall the definition of Gaussian periods and the spectral relation with GP-graphs. 

If $q=p^m$ and $r=p^s$ with $s\mid m$, the \textit{Gaussian periods} are given by
\begin{equation}\label{eq: gaussian periods}
	\eta_{i}^{(k_r,q)} = \sum_{x\in C_{i}^{(k_r,q)}} \zeta_p^{\Tr_{q/p}(x)} \in \mathbb{C}, \qquad 0 \le i \le k_r-1,
\end{equation} 
where $\zeta_p = e^{\frac{2\pi i}{p}}$, 
	$C_{i}^{(k_r,q)} = \omega^{i} \, \langle \omega^{k_r} \rangle $
is the coset in $\ff_q^*$ of the subgroup $\langle \omega^{k_r} \rangle$ of $\ff_{q}^*$ and 
\begin{equation}\label{eq: k_r} 
	k_r = \gcd(k,\tfrac{q-1}{r-1}). 
\end{equation}   
From Theorem 13 in \cite{DY}, we have the following integrality results:
\begin{equation} \label{eq: integral Gaussian periods}
	\eta_i^{(k_r,q)} \in \Z \qquad \text{and} \qquad k_r \eta_i^{(k_r,q)} +1 \equiv 0 \pmod r.
\end{equation}
Actually, $N$ and $N_1$ are used in \cite{DY} for our $k$ and $k_r$, respectively.

\subsubsection*{The spectrum of GP-graphs}
The spectra of GP-graphs are determined by Gaussian periods. Let $n=\tfrac{q-1}k$ 
and $\eta_0 = \eta_0^{(k,q)}, \ldots, \eta_{k-1} = \eta_{k-1}^{(k,q)}$ 
be the Gaussian periods as in \eqref{eq: gaussian periods}. Denote by 
\begin{equation} \label{Gaussian periods diferentes}
	\eta_{i_1},\ldots, \eta_{i_t}
\end{equation}
the different Gaussian periods not equal to $\eta$ and define the following numbers  
\begin{equation} \label{numbers} 
	\begin{aligned}
		& \mu = \#\{0 \le \ell \le k-1 : \eta_\ell =n\} \ge 0,  \\[1mm]
		& \mu_{i_j} = \#\{ 0 \le \ell \le k-1 : \eta_{\ell} = \eta_{i_j}\} \ge 1,
	\end{aligned}
\end{equation}
for $1 \le j \le t$. For simplicity, sometimes we will use the notation $\mu =\mu_{i_0}$.

We now put together the computation of the spectrum of $\G(k,q)$ and conditions for $\G(k,q)$ to be connected and for $Spec(\G(k,q))$ to be integral, previously obtained in \cite{PV8}, which we will use later.

\begin{thm}[\cite{PV8}] 
	\label{teo: SpecGkq} 
	Let $q=p^m$ be a prime power and $k \in \N$ such that $k\mid q-1$. If we put $n=\frac{p^m-1}k$ then, in the notations in \eqref{Gaussian periods diferentes} and \eqref{numbers}, we have 
	\begin{equation} \label{spec Gkq} 
		Spec(\G(k,q)) = \{ [n]^{1+\mu n}, [\eta_{i_1}]^{\mu_{i_1}n}, \ldots, [\eta_{i_t}]^{\mu_{i_t}n} \}.
	\end{equation}
	Moreover, we have:
	\begin{enumerate}[$(a)$]
		\item $\G(k,q)$ is (strongly) connected \quad $\Leftrightarrow$ \quad $\mu=0$ \quad $\Leftrightarrow$ \quad $ord_{n}(p)=m$. \msk 
		
		\item $Spec(\G(k,q))\subset \Z$ \quad $\Leftrightarrow$ \quad $k \mid \frac{q-1}{p-1}$. 
	\end{enumerate}
\end{thm}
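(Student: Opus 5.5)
The plan is to diagonalize the adjacency matrix $A$ of $\G(k,q)=Cay(\ff_q,R_k)$ with the additive characters of $\ff_q$, and then read off the three claims from the resulting character sums. The additive characters are $\chi_a(x)=\zeta_p^{\Tr_{q/p}(ax)}$ for $a\in\ff_q$. Each $\chi_a$ is an eigenvector of $A$, with eigenvalue $\lambda_a=\sum_{s\in R_k}\chi_a(s)$, because $(A\chi_a)(u)=\sum_{s\in R_k}\chi_a(u+s)=\lambda_a\chi_a(u)$. These $q$ characters are orthogonal, so they form a basis, and $Spec(\G(k,q))$ is the multiset $\{\lambda_a\}_{a\in\ff_q}$. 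For $a=0$ we get $\lambda_0=|R_k|=n$.

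Next I compute $\lambda_a$ for $a\ne 0$. Write $a=\omega^{j}$ and $R_k=\langle\omega^k\rangle=C_0^{(k,q)}$. Then $aR_k=C_{j \bmod k}^{(k,q)}$, so $\lambda_a=\eta_{j\bmod k}$. As $a$ runs over $\ff_q^*$, each residue class $j\bmod k$ occurs exactly $n$ times. Hence every Gaussian period $\eta_\ell$ contributes $n$ copies of itself. Grouping equal values, the periods equal to $n$ add $\mu n$ to the multiplicity of the eigenvalue $n$, on top of the one coming from $a=0$. Each distinct value $\eta_{i_j}\neq n$ gets multiplicity $\mu_{i_j}n$. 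This gives \eqref{spec Gkq}.

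For (a), the graph is $n$-regular, and a vertex-transitive Cayley digraph is strongly connected iff the multiplicity of the eigenvalue $n$ is $1$, which happens iff $\mu=0$. To be safe in the directed case I would argue directly: $\G(k,q)$ is strongly connected iff the additive span of $R_k$ is all of $\ff_q$. That span is the $\ff_p$-span of $\langle\omega^k\rangle$, which is $\ff_p(\omega^k)$. Since $\omega^k$ has order $n$, we have $\ff_p(\omega^k)=\ff_{p^{ord_n(p)}}$, so connectivity is equivalent to $ord_n(p)=m$. Also $\eta_\ell=n$ means $\Tr_{q/p}(a\,\cdot)$ vanishes on $R_k$, which for $a\ne0$ is equivalent to the span of $R_k$ being a proper subspace. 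Hence $\mu=0$ iff the span is everything, and the three conditions are equivalent.

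Part (b) is the main obstacle. The eigenvalues lie in $\mathbb{Q}(\zeta_p)$, and $\sigma_c:\zeta_p\mapsto\zeta_p^c$, for $c\in\ff_p^*$, sends $\eta_j$ to the period of the coset $c\,C_j$. So all eigenvalues are rational, hence integers since they are algebraic integers, iff $c\,C_j=C_j$ for all $c\in\ff_p^*$. This in turn holds iff $\ff_p^*=\langle\omega^{(q-1)/(p-1)}\rangle\subseteq\langle\omega^k\rangle$, i.e.\ iff $k\mid\frac{q-1}{p-1}$. For the forward direction, I must check that the periods of distinct cosets can be distinguished. I would do this by using Galois invariance of the whole eigenvalue multiset together with Fourier inversion: the function $a\mapsto\lambda_a$ determines $R_k$ via $\mathbf 1_{R_k}(x)=q^{-1}\sum_a\lambda_a\overline{\chi_a(x)}$. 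If every $\lambda_a$ is fixed by $\sigma_c$, this inversion formula gives $\mathbf 1_{R_k}(x)=\mathbf 1_{R_k}(c^{-1}x)$, so $cR_k=R_k$ for all $c\in\ff_p^*$. That is exactly the divisibility condition.
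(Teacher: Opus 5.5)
Your proof is correct and complete. It differs from the paper mainly in that the paper gives no argument for this statement: it simply defers to Theorems 2.1 and 4.1 of \cite{PV8}. You instead derive everything directly.

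For the spectrum, the character diagonalization gives $\lambda_a=\eta_{j \bmod k}$ for $a=\omega^j$. Each coset accounts for exactly $n$ values of $a$, which yields \eqref{spec Gkq} including the extra $\mu n$ on the eigenvalue $n$.

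For $(a)$, you identify the additive span of $R_k$ with the subfield $\ff_p(\omega^k)=\ff_{p^{ord_n(p)}}$. You also show that $\mu>0$ is equivalent to $\Tr_{q/p}(a\,\cdot)$ vanishing on $R_k$ for some $a\neq 0$, using that a sum of $n$ roots of unity equals $n$ only if every term is $1$. This settles the directed case cleanly, without relying on the (also true, for abelian Cayley digraphs) equality between the multiplicity of $n$ and the number of strong components.

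For $(b)$, your Galois-plus-Fourier-inversion argument is exactly the special case, for the elementary abelian group $(\ff_q,+)$, of the standard Bridges--Mena criterion. That criterion says a Cayley digraph on an abelian group is integral iff its connection set is a union of orbits of $x\mapsto cx$ with $c$ coprime to the group exponent. This is the kind of result one would cite to get the statement from the literature.

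One remark on wording: the Fourier inversion step does not ``distinguish the periods of distinct cosets.'' It bypasses that issue by deducing $cR_k=R_k$ directly from $\sigma_c(\lambda_a)=\lambda_a$ for all $a$. This is actually better, since Gaussian periods of distinct cosets can coincide, so such a distinction would not be available in general.

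Your route gives a self-contained, elementary proof that works uniformly for directed and undirected $\G(k,q)$. The paper's citation buys brevity at the cost of relying on external results.
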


\begin{proof}
	It follows from Theorems 2.1 and 4.1 in \cite{PV8}.
\end{proof}

The spectrum of GP-graphs $\G(k,q)$ is explicitly known for few families. The graphs $\G(1,q)$ and $\G(2,q)$ being the complete and the Paley graphs are known. In \cite{PV8} we computed the spectrum of $\G(3,q)$ and $\G(4,q)$.  In the same work we have also computed the spectrum of Hamming and semiprimitive GP-graphs.
The spectrum of the subfamily of semiprimitive GP-graphs given by $\G(q^\ell+1,q^m)$ was obtained in \cite{PV1}.
A summary of all these computations can be found in Section~4 of \cite{PV19}.
Recently, we have studied the nature of the spectrum of this graphs in \cite{PV19}, characterizing all GP-graphs with real or integral spectrum in arithmetic terms (see \cite{PV8}). 
In \cite{PV4}, we related the spectrum of Cartesian decomposable GP-graphs with certain family of ICC.

\section{Equivalent trace representations of ICCs} \label{sec: ICCs}
Delsarte's classical theorem on linear codes states that given an extension $\ff_q/\ff_r$, if $C$ is a $q$-ary linear code, then we have the relation 
	$\mathrm{Res}(C)^\perp = \Tr(C^\perp)$ 
between $r$-ary codes.
It is a direct consequence of this theorem that any irreducible cyclic code $\CC$ is a trace code as given in \eqref{eq: Ckqs}. 
That is, if $\CC$ is an ICC over $\ff_{r}$ of length $n$ and dimension $h$, then 
	$$ \CC = \CC(k,q/r) $$ 
for some extension $\ff_q/\ff_r$, with $h$ given by \eqref{eq: h} and $k$ and $n$ related by \eqref{eq: k}. 

One can get $r$-ary irreducible cyclic codes by tracing down from different extensions $\ff_{q'}$ of $\ff_r$. 
We now show that the $r$-ary codes obtained in this way, for a given length and dimension, are 
all equal to each other and, hence, that there is a minimal extension $\ff_q$ from where we can go down, which is precisely the one defined by the parameters in \eqref{eq: r,q}, \eqref{eq: k}, and \eqref{eq: h}. We call this the \textit{minimal representation} of the irreducible cyclic code.

\begin{thm} \label{teo: Ck Ck'}  
Let $p,s, m', k', q' \in \N$ with $p$ prime such that $s\mid m'$, $k' \mid p^{m'}-1$, $q'=p^{m'}$, $r=p^{s}$.
Let $q=p^{m}$, with $m=hs$ where $h=ord_{n'}(p^s)$ and $n'=\frac{q'-1}{k'}$.
Let $k$ be a positive divisor of $q-1$, such that $n=\frac{q-1}{k}$.   
If $n=n'$, then 
\begin{equation} \label{eq: irr cyc codes}
	\CC(k',q'/r) = \CC(k,q/r).
\end{equation}	
Thus, if two $r$-ary ICCs have the same length (defined using the same primitive element), then they are equal.
\end{thm}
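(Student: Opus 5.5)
The plan is to show that both codes are trace codes generated by the same element $\beta$ of order $n$, living in the smaller field $\ff_q$, and then to use transitivity and surjectivity of the trace. \emph{First step: $\ff_q \subseteq \ff_{q'}$.} Since $n = n' \mid q'-1 = r^{m'/s}-1$, we have $r^{m'/s} \equiv 1 \pmod n$. Hence $h = ord_n(r)$ divides $m'/s$, so $m = hs \mid m'$. Also $n \mid r^h - 1 = q-1$, so $k=(q-1)/n$ is a genuine divisor of $q-1$ as assumed.

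\emph{Second step: matching primitive elements.} This is where the phrase ``defined using the same primitive element'' has to be made precise. Fix a primitive element $\omega'$ of $\ff_{q'}$ and put
$$N = \tfrac{q'-1}{q-1}, \qquad \omega = (\omega')^{N}.$$
Then $\omega$ is a primitive element of $\ff_q$; this is the compatible choice used to define $\CC(k,q/r)$. The key computation is
$$ Nk = \tfrac{q'-1}{q-1}\cdot \tfrac{q-1}{n} = \tfrac{q'-1}{n'} = k'. $$
It gives $\omega^{k} = (\omega')^{k'} =: \beta$, an element of order $n$ lying in $\ff_q$. (For any other choice of primitive elements, the codes are only equivalent, in line with the Note after the definition of $\CC(k,q/r)$.)

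\emph{Third step: comparing codewords.} For $\gamma \in \ff_{q'}$, transitivity of the trace in the tower $\ff_r \subseteq \ff_q \subseteq \ff_{q'}$, together with $\beta^i \in \ff_q$, gives
$$\Tr_{q'/r}(\gamma \beta^i) = \Tr_{q/r}\big(\Tr_{q'/q}(\gamma\beta^i)\big) = \Tr_{q/r}\big(\Tr_{q'/q}(\gamma)\,\beta^i\big).$$
So the codeword $c_\gamma$ of $\CC(k',q'/r)$ equals the codeword $c_{\Tr_{q'/q}(\gamma)}$ of $\CC(k,q/r)$. Since $\Tr_{q'/q}\colon \ff_{q'}\to\ff_q$ is surjective, $\gamma$ ranging over $\ff_{q'}$ produces every $\delta\in\ff_q$. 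This yields both inclusions, and therefore \eqref{eq: irr cyc codes}.

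The final sentence of the statement follows by applying this argument to each of the two codes. Each one equals the code $\CC(k,q/r)$ built on the minimal field $\ff_{r^h}$, which depends only on $n$ (and $r$), so the two codes coincide. The only delicate point is the second step, namely checking that $\omega^k=(\omega')^{k'}$ under the compatible choice of primitive elements; the rest is formal.
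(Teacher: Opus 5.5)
Your proof is correct and follows the paper's argument essentially step by step: you get $m=hs\mid m'$ from $r^{m'/s}\equiv 1 \pmod n$, use the compatible primitive element $\omega=(\omega')^{(q'-1)/(q-1)}$ of $\ff_q$ so that $\omega^{k}=(\omega')^{k'}$, and obtain both inclusions from transitivity of the trace and surjectivity of $\Tr_{q'/q}$. The only minor difference is in the final sentence: the paper first invokes Delsarte's theorem to put each code in trace form, whereas you take the trace form as given and state the ``same primitive element'' hypothesis explicitly as this compatibility condition.
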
 

\begin{proof} 
Since $n=n'$, we have that $h=ord_{n}(p^{s})$ and so $r^{\frac{m'}{s}}=p^{m'}\equiv 1\pmod{n}$. 
This implies that $m=hs\mid m'$ and so $\ff_{p^{m}}$ is a subfield of $\ff_{p^{m'}}$.

Notice that the field $\ff_{p^m}$ can be seen as a subset of $\ff_{p^{m'}}$, as the fixed elements by the Frobenius automorphism 
$\varphi_{p^{m}}(a)=a^{p^{m}}$, that is 
	$$ \ff_{p^{m}} = \{y\in \ff_{p^{m'}}: y^{p^{m}} = y \}. $$
Moreover, if $\omega$ is a primitive element of $\ff_{p^{m'}}$, 
then $\alpha=\omega^{\frac{p^{m'}-1}{p^{m}-1}}$ is a primitive element of $\ff_{p^{m}}$.
	
Let $R_{k',m'}=\{x^{k'}:x\in (\ff_{p^{m'}})^*\}$. We claim that 
\begin{equation} \label{eq: RkrFq0}
	R_{k',m'}\subseteq \ff_{p^{m}}.
\end{equation}
Since $R_{k',m'} = \langle \omega^{k'}\rangle$, 
it is enough to show that $\omega^{k'} \in \ff_{p^{m}}$. 
Since $k=\frac{p^{m}-1}{n}$ and $n=\frac{p^{m'}-1}{k'}$, we have that $k'=(\frac{p^{m'}-1}{p^{m}-1})k$ and thus 
\begin{equation}\label{eq: omk alphk'}
	\omega^{k'}=\omega^{(\frac{p^{m'}-1}{p^{m}-1})k}=\alpha^{k}.
\end{equation}
Hence, $\omega^{k'} = \alpha^{k}\in \ff_{p^{m}}$ and therefore $R_{k',p^{m'}} \subseteq \ff_{p^{m}}$ as claimed. 
	
\smallskip
	
Now, let $\gamma\in \ff_{p^{m'}}$ and let us consider the word 
	$$c_\gamma=\big(\Tr_{p^{m'}/p^{s}}(\gamma \cdot \omega^{ik'})\big)_{i=0}^{n-1}\in \CC(k',q'/r),$$ 
by \eqref{eq: RkrFq0} we have that
	$$c_\gamma=\big(\Tr_{p^{m}/p^{s}}(\Tr_{p^{m'}/p^{m}}(\gamma \cdot \omega^{ik'}))\big)_{i=0}^{n-1}=
	\big(\Tr_{p^{m}/p^{s}}(\gamma' \cdot \omega^{ik'})\big)_{i=0}^{n-1}$$
with $\omega'=\Tr_{p^{m'}/p^{m}}(\gamma)$. Thus, the equation \eqref{eq: omk alphk'} implies that
	$$c_\gamma=\big(\Tr_{p^{m}/p^{s}}(\gamma' \cdot \alpha^{ik})\big)_{i=0}^{n-1}\in \CC(k,q/r),$$
and hence $\CC(k',q'/r)\subseteq \CC(k,q/r)$.
	
Conversely, consider $c_{\gamma'}=\big(\Tr_{p^{m}/p^{s}}(\gamma' \alpha^{ik})\big)_{i=0}^{n-1}\in \CC(k,q/r)$ for some 
$\gamma'\in \ff_{p^{m}}$. 
By \eqref{eq: omk alphk'}, we have that  $\alpha^{ik}=\omega^{ik'}\in \ff_{p^{m}}$ for any $i=0,\ldots,n-1$.
On the other hand, there exists $\gamma\in \ff_{p^{m'}}$ such that $\Tr_{p^{m'}/p^{m}}(\gamma)=\gamma'$, since $\Tr_{p^{m'}/p^{m}}$ is onto.
Thus we have that
	$$c_{\gamma'}=\big(\Tr_{p^{m}/p^{s}}(\Tr_{p^{m'}/p^{m}}(\gamma \omega^{ik'}))\big)_{i=0}^{n-1} = 	
	\big(\Tr_{p^{m'}/p^{s}}(\gamma \omega^{ik'})\big)_{i=0}^{n-1}\in \CC(k',p^{m'},s).$$
Therefore, we have that $\CC(k',q'/r)=\CC(k,q/r)$, as asserted. 

The final assertion follows directly from \eqref{eq: irr cyc codes}. 
Indeed, if $\CC_1$ and $\CC_2$ are two $r$-ary irreducible cyclic codes of length $n$, then by Delsarte's theorem there exist $q_1,q_2$ powers of $r$ with $q_1,q_2\equiv 1 \pmod{n}$ such that
$\CC_1=\CC(k_1,q_1/r)$ and $\CC_2=\CC(k_2,q_2/r)$ where $k_1=\frac{q_1-1}{n}$ and $k_2=\frac{q_2-1}{n}$. 
By \eqref{eq: irr cyc codes} we have that 
$$
\CC_1=\CC(k,q/r)=\CC_2
$$
where $q=r^{h}$ with $h=ord_{n}(r)$ and $k=\frac{q-1}{k}$, as asserted.
\end{proof}

\begin{rem} \label{rem: Ckr}
The above theorem implies that we can always assume that the parameters $[n,h]$ of an irreducible $r$-ary cyclic code 
satisfy the relations \eqref{eq: r,q}, \eqref{eq: k} and \eqref{eq: h}.	
\end{rem} 

\begin{exam}
Let $p=3$, $s=2$ and $n=5$, hence $r=3^2=9$. In this case $ord_{5}(3^{2})=2$. 
By Proposition \ref{teo: Ck Ck'}, for any $t\in \mathbb{N}$ we have that 
$$ \CC_9(16,3^{4}) = \CC_9(\tfrac{3^{4t}-1}{5}, 3^{4t}). $$ 
That is, the irreducible cyclic code over $\ff_9$ of length $5$ is $\CC_9(16,3^{4})$, which has dimension 2.
Although it can be obtained by tracing down from the fields $\ff_{3}^{4t}$, this is the minimal representation.
\hfill $\diamond$	
\end{exam}

\section{The spectral relation between $\CC(k,q/r)$ and $\G(k_r,q)$} \label{sec: spectral relation}
In this section, we show that there is a neat spectral relation between integral connected GP-graphs and irreducible cyclic codes. More precisely, we will show that the weight distribution of $\CC(k,q/r)$ and the spectrum of $\G(k_r,q)$, where 
$k_r = \gcd(k,\frac{q-1}{r-1})$, determine each other. The connection will be obtained through Gaussian periods.

The computation of the spectrum of (irreducible) cyclic codes is in general a difficult task. 
There are several papers on the computation of the spectra of some of these codes using exponential sums. 
Baumert and McEliece were one of the first authors to compute the spectrum in terms of Gauss sums (\cite{BMc}, \cite{BMy}, \cite{Mc}). 
Gaussian sums and Gauss periods are related and Ding, in 2009, showed (\cite{Di1}, \cite{Di2}) that the weights of the irreducible cyclic codes $\CC(k,q/r)$ can be calculated in terms of Gaussian periods. 
On the other hand, recently, we have shown in \cite{PV8} that the eigenvalues of GP-graphs are given precisely by Gaussian periods
(see Section \ref{sec: GPs}).

\subsubsection*{The spectral relation between codes and graphs} \label{subsec: 2.2}
Now, in the notations of the Introduction, we relate the weight distribution of the $r$-ary irreducible cyclic code $\CC(k,q/r)$ with the spectrum of the generalized Paley graphs $\G(k_r,q)$, taking advantage of the fact that both invariants can be expressed in terms of Gaussian periods.

We begin by an observation about the different roles played by the orders of powers of a prime $p$ modulo certain integers $n$ in ICCs and in GP-graphs. On the one hand, we have seen that $ord_n(r)$ is the dimension of the cyclic code $\CC(k,q/r)$. On the other hand, the GP-graph $\G(k,q)$ is connected if and only if $ord_n(p)=m$ where $n=\frac{q-1}{k}$ and $q=p^m$.

The following lemma establishes a relation between the orders associated to two related triples of data $(n,r,h)$ and $(n_r,p,m)$.

\begin{lem} \label{lem: ord_n conditions}
Let $p$ be a prime, $m, h, s \in \N$ such that $m=sh$ and put $q=p^m$ and $r=p^s$.
For each $k \in \N$ such that $k\mid q-1$ put $n=\tfrac{q-1}{k}$, $k_r=\gcd(k,\frac{q-1}{r-1})$
and $n_r=\frac{q-1}{k_r}$. Thus, 
$$ord_{n}(r)=h \qquad \Leftrightarrow \qquad ord_{n_r}(p)=m.$$
\end{lem}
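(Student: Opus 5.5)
The plan is to reduce both orders to a single divisibility condition by identifying $n_r$ explicitly. The key observation is the identity
$$ n_r = \operatorname{lcm}\!\big(n,\, r-1\big). $$
To prove it, put $N=q-1$ and note that both $k$ and $\frac{N}{r-1}$ divide $N$, since $r-1 \mid q-1$. For any two divisors $a,b$ of $N$ one has $\operatorname{lcm}(\tfrac{N}{a},\tfrac{N}{b}) = \tfrac{N}{\gcd(a,b)}$; this can be checked prime by prime on exponents, where it becomes $\max(e-\alpha,e-\beta)=e-\min(\alpha,\beta)$. Taking $a=k$ and $b=\frac{N}{r-1}$ gives $\operatorname{lcm}(n,r-1)=\frac{N}{\gcd(k,\frac{N}{r-1})}=\frac{q-1}{k_r}=n_r$.

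With this in hand, for any $j\in\N$ we have
$$ n_r \mid p^j-1 \iff n\mid p^j-1 \ \text{ and } \ r-1 \mid p^j-1. $$
The condition $p^s-1\mid p^j-1$ is equivalent to $s\mid j$, by the standard fact $\gcd(p^s-1,p^j-1)=p^{\gcd(s,j)}-1$. Writing $j=st$, the condition $n\mid p^{st}-1$ reads $n \mid r^t-1$. Hence the set of $j$ with $n_r\mid p^j-1$ is exactly $\{st : n\mid r^t-1\}$. Taking minima gives the clean formula
$$ ord_{n_r}(p) = s\cdot ord_n(r). $$
Since $m=sh$, we get $ord_{n_r}(p)=m$ if and only if $s\, ord_n(r)= sh$, that is, if and only if $ord_n(r)=h$, which is the statement of the lemma.

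The only step needing real care is the lcm identity for $n_r$. It relies on $\frac{q-1}{r-1}$ being an integer dividing $q-1$, which holds because $s\mid m$. Everything else is the elementary fact $\gcd(p^a-1,p^b-1)=p^{\gcd(a,b)}-1$. In fact the argument proves the stronger identity $ord_{n_r}(p)=s\,ord_n(r)$ without any hypothesis linking $h$ and $n$, and the equivalence follows immediately from it.
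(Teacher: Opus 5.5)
Your proof is correct, and it follows a genuinely different route from the paper's.

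The paper proves the two implications separately, each by contradiction, and works with divisibility conditions on $k$ and $k_r$.

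\textbf{Forward direction.} It first notes that $n\mid n_r\mid r^h-1$, so $ord_{n_r}(r)=h$. It then supposes $n_r\mid p^a-1$ for some $a<m$. This forces $\frac{p^m-1}{p^a-1}\mid k_r\mid \frac{p^m-1}{p^s-1}$, hence $p^s-1\mid p^a-1$ and $s\mid a$. That contradicts $ord_{n_r}(r)=h$.

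\textbf{Converse.} It supposes $n\mid p^{s\ell}-1$ for some $\ell<h$. Then $\frac{p^{sh}-1}{p^{s\ell}-1}$ divides both $k$ and $\frac{q-1}{r-1}$, hence divides $k_r$. So $n_r\mid p^{s\ell}-1$, contradicting $ord_{n_r}(p)=m$.

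You instead work directly with the moduli. The identity $n_r=\operatorname{lcm}(n,r-1)$ splits the single condition $n_r\mid p^j-1$ into the two conditions $n\mid p^j-1$ and $s\mid j$. This gives the exact formula $ord_{n_r}(p)=s\cdot ord_n(r)$ for every $k\mid q-1$, and both implications follow at once.

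What your version buys:
\begin{itemize}
\item It is shorter and symmetric.
\item It proves strictly more: it gives the value of $ord_{n_r}(p)$ even when the equivalent conditions fail.
\item It sidesteps a point the paper leaves implicit. Quotients such as $\frac{p^m-1}{p^a-1}$ are integers only when $a\mid m$, so the paper's argument tacitly needs $a$ replaced by $\gcd(a,m)$, and similarly $\ell$ by $\gcd(\ell,h)$.
\end{itemize}

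What the paper's version buys: it stays in the language of divisibility of $k$ and $k_r$ by quotients $\frac{q-1}{p^a-1}$. That is the language used later for primitive divisors and the primitive height $h_p(k)$.
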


\begin{proof}
Assume first that $ord_{n}(p^s)=h$.
Since $k_r\mid k$, we have that $n\mid n_r$ which implies that
$ord_{n_r}(p^s) \ge ord_{n}(p^s) = h$. 
On the other hand, since $n_r\mid p^{m}-1=(p^{s})^{h}-1$, then $ord_{n_r}(p^{s})\le h$, and so we obtain that
\begin{equation}\label{eq: ord n n'}
	ord_{n_r}(p^s) = ord_{n}(p^s) = h.		
\end{equation}	

Now, suppose that there exists $a\in\{1,\ldots,m-1\}$ such that $n_r\mid p^{a}-1$. 
Then, $\frac{p^m-1}{p^{a}-1}\mid k_r$ and since $k_r=\gcd(\frac{p^m-1}{p^s-1},k)$ we obtain that
$\frac{p^m-1}{p^{a}-1}\mid \frac{p^m-1}{p^s-1}$. 
Thus, we have that $p^{s}-1\mid p^{a}-1$ and so $s\mid a$.
Hence, there exists $t\in \{1,\ldots,h-1\}$ such that $a=st$ with 
$p^{st}\equiv 1 \pmod{n_r}$ which in terms of $p^s$ means that $ord_{n_r}(p^s)\le t <h$, which contradicts the equation \eqref{eq: ord n n'}.
Thus, we obtain that  $n_r\nmid p^{a}-1$ for any $a\in\{1,\ldots,m-1\}$ and therefore $ord_{n_r}(p)=m$ as asserted. 

Conversely, if $ord_{n_r}(p)=m=sh$, it is enough to see that 
$n\nmid p^{s\ell}-1$ for any $\ell \in \{1,\ldots,h-1\}$, which is equivalent to $\frac{p^{sh}-1}{p^{s\ell}-1}\nmid k$. 
So, assume that $n\mid p^{s\ell}-1$ for some $\ell \in \{1,\ldots,h-1\}$, hence 
$\frac{p^{sh}-1}{p^{s\ell}-1}\mid k$. 
Since $\frac{p^{sh}-1}{p^{s\ell}-1}\mid \frac{p^{sh}-1}{p^s-1}$, then we have
$$\tfrac{p^{sh}-1}{p^{s\ell}-1}\mid \gcd(\tfrac{p^{sh}-1}{p^s-1},k)=k_r,$$
and hence $n_r\mid p^{s\ell}-1$ for some $\ell \in \{1,\ldots,h-1\}$, 
which contradicts the fact that
$ord_{n_r}(p)=sh$. 
Therefore, $n\nmid p^{s\ell}-1$ for all $\ell \in \{1,\ldots,h-1\}$ and so we have that $ord_{n}(p^s)=h$, as asserted.
\end{proof}

The previous lemma gives a simple relation between the order conditions in irreducible cyclic codes and the connectedness of GP-graphs.

\begin{lem} \label{lem: Gkrq conexo}
If $ord_{n}(r)=h$, then the GP-graph $\G(k_r,q)$ is connected and integral.	
\end{lem}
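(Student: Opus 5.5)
The plan is to read both properties of $\G(k_r,q)$ off Theorem~\ref{teo: SpecGkq}, using Lemma~\ref{lem: ord_n conditions} for connectedness and a divisibility argument for integrality.

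For connectedness, put $n_r=\frac{q-1}{k_r}$. Since $k_r=\gcd(k,\frac{q-1}{r-1})$ divides $k$, and $k \mid q-1$, we have $k_r \mid q-1$, so $\G(k_r,q)$ is a GP-graph in the normalized form of Section~\ref{sec: GPs}. It has regularity degree $n_r$ and lives over $\ff_q$ with $q=p^m$, where $m=sh$. The hypothesis $ord_n(r)=h$ is exactly the left-hand side of Lemma~\ref{lem: ord_n conditions}, so that lemma gives $ord_{n_r}(p)=m$. Theorem~\ref{teo: SpecGkq}(a), applied to the pair $(k_r,q)$, then shows that $\G(k_r,q)$ is connected.

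For integrality, Theorem~\ref{teo: SpecGkq}(b) applied to $(k_r,q)$ says the spectrum is integral if and only if $k_r \mid \frac{q-1}{p-1}$. We know $k_r \mid \frac{q-1}{r-1}$ by definition of $k_r$. Moreover $p-1 \mid p^s-1=r-1$, so $\frac{q-1}{r-1}$ divides $\frac{q-1}{p-1}$, since the quotient is $\frac{r-1}{p-1}\in\N$. Hence $k_r \mid \frac{q-1}{p-1}$ and $\G(k_r,q)$ is integral. As a cross-check, one could instead note that the eigenvalues are Gaussian periods $\eta_i^{(k_r,q)}$, which are integers by \eqref{eq: integral Gaussian periods}.

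There is no real obstacle. The only point to watch is that Lemma~\ref{lem: ord_n conditions} and Theorem~\ref{teo: SpecGkq} are applied to the same $q=p^m$ with $m=sh$, which is the standing convention of the minimal representation (Remark~\ref{rem: Ckr}).
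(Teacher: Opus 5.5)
Your proposal is correct and follows the paper's proof essentially verbatim. Connectedness comes from Lemma~\ref{lem: ord_n conditions} combined with Theorem~\ref{teo: SpecGkq}$(a)$ applied to the pair $(k_r,q)$, and integrality comes from Theorem~\ref{teo: SpecGkq}$(b)$ together with the divisibility chain $k_r \mid \frac{q-1}{r-1} \mid \frac{q-1}{p-1}$.
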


\begin{proof}
By ($a$) in Theorem \ref{teo: SpecGkq}, $\G(k_r,p^m)$ is connected if and only if $ord_{n_r}(p)=m$. By Lemma \ref{lem: ord_n conditions} this happens if and only if $ord_n(r)=h$.

By item ($b$) in Theorem \ref{teo: SpecGkq}, $\G(k_r,p^m)$ is integral if and only if 
since $k_r \mid \frac{p^m-1}{p-1}$. But this holds, since $k_r \mid \frac{p^m-1}{p^{s}-1}$ by definition and clearly $\frac{p^m-1}{p^s-1} \mid \frac{p^m-1}{p-1}$.
\end{proof}

We are now in a position to prove  the relation between the weight distribution of $\CC(k,q/r)$ and 
the spectrum of the integral graph $\G(k_r,q)$. This is one of the main results in this work.
\begin{thm} \label{teo: rel Ckq Gkq}
Let $p$ be a prime and $q,r,k,n\in \N$ as in \eqref{eq: r,q}--\eqref{eq: k} such that $q=r^{h}$ with $h=ord_{n}(r)$.
Then, the weight distribution of the irreducible cyclic code $\CC(k,q/r)$ as in \eqref{eq: Ckqs} and the spectrum of the integral connected GP-graph $\G(k_r,q)$ as in \eqref{eq: Gkq}, where 	
	$k_r = \gcd(k,\tfrac{q-1}{r-1})$, determine each other. 
More precisely, for any $\gamma\in \ff_{q}$, the eigenvalue $\lambda_{\gamma}$ of $\G(k_r,q)$ and the weight $w(c_\gamma)$ of $c_{\gamma} \in \CC(k,q/r)$ are related by 
\begin{equation} \label{Pesoaut}
	\lambda_{\gamma} = \frac{q-1}{k_r} -\frac{k r}{k_r(r-1)} w(c_{\gamma}) 
\end{equation}
or equivalently by 
		$$ w(c_{\gamma}) = \tfrac{r-1}{r} ( n -\tfrac{k_r}{k} \lambda_{\gamma}); $$
and the multiplicity $m(\lambda_{\gamma})$ of $\lambda_{\gamma}$ is the frequency $A_{w(c_{\gamma})}$ of $w(c_\gamma)$ for all $\gamma\in \ff_q$. 
In particular, the multiplicity of $\lambda_0 = \frac{q-1}{k_r}$ is $A_0=1$. 
\end{thm}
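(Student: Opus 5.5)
We compute the weight of $c_\gamma$ with additive characters, and then reduce the resulting character sum to a sum over the subgroup $\langle \omega^{k_r}\rangle$. That sum is exactly the character-sum description of the eigenvalues of the Cayley graph $\G(k_r,q)$. Let $\psi(x)=\zeta_p^{\Tr_{q/p}(x)}$ be the canonical additive character of $\ff_q$. Since $\Tr_{q/p}=\Tr_{r/p}\circ \Tr_{q/r}$, orthogonality of the characters of $\ff_r$ gives, for $\gamma\in\ff_q$,
\begin{equation*}
	w(c_\gamma) = n - \frac{1}{r}\sum_{i=0}^{n-1}\sum_{y\in\ff_r}\psi(y\gamma\omega^{ik})
	= \frac{r-1}{r}\,n - \frac{1}{r}\sum_{y\in\ff_r^*}\sum_{x\in\langle\omega^k\rangle}\psi(y\gamma x).
\end{equation*}
On the graph side, $\G(k_r,q)=Cay(\ff_q,\langle\omega^{k_r}\rangle)$ is a Cayley graph on an abelian group. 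Its eigenvectors are the characters $\psi_\gamma(x)=\psi(\gamma x)$, $\gamma\in\ff_q$, with eigenvalue $\lambda_\gamma=\sum_{x\in\langle\omega^{k_r}\rangle}\psi(\gamma x)$. This is a Gaussian period, and it recovers Theorem~\ref{teo: SpecGkq}. In particular, the eigenvalues are indexed by $\gamma\in\ff_q$ with each $\gamma$ counted once, so $m(\lambda)=\#\{\gamma : \lambda_\gamma=\lambda\}$.

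The key step is the group identity $\ff_r^*\cdot\langle\omega^k\rangle=\langle\omega^{k_r}\rangle$. Since $\ff_r^*=\langle\omega^{(q-1)/(r-1)}\rangle$, the product of the two cyclic subgroups is the subgroup generated by $\omega^{\gcd(k,(q-1)/(r-1))}=\omega^{k_r}$. The multiplication map $\ff_r^*\times\langle\omega^k\rangle\to\langle\omega^{k_r}\rangle$ is therefore onto, and each fibre has size
\[
	\frac{(r-1)n}{|\langle\omega^{k_r}\rangle|}=\frac{(r-1)n}{(q-1)/k_r}=\frac{(r-1)k_r}{k}.
\]
Hence the double sum equals $\frac{(r-1)k_r}{k}\lambda_\gamma$, and
\[
	w(c_\gamma)=\frac{r-1}{r}\Big(n-\frac{k_r}{k}\lambda_\gamma\Big).
\]
Solving this for $\lambda_\gamma$ and using $nk=q-1$ gives \eqref{Pesoaut}.

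For the multiplicities, the map $\gamma\mapsto c_\gamma$ from $\ff_q$ to $\CC(k,q/r)$ is injective. Indeed, $\dim \CC(k,q/r)=h=ord_n(r)$ and $q=r^h$, so $|\CC(k,q/r)|=r^h=q$ and the surjection $\ff_q\to\CC(k,q/r)$ is a bijection. This is the point where the hypothesis $q=r^h$ enters. Because the relation between $w(c_\gamma)$ and $\lambda_\gamma$ is affine with nonzero slope, it follows that
\[
	A_{w(c_\gamma)}=\#\{\gamma' : w(c_{\gamma'})=w(c_\gamma)\}=\#\{\gamma' : \lambda_{\gamma'}=\lambda_\gamma\}=m(\lambda_\gamma).
\]
Taking $\gamma=0$ gives $\lambda_0=\frac{q-1}{k_r}$ with multiplicity $A_0=1$. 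This is consistent with Lemma~\ref{lem: Gkrq conexo}, which says $\G(k_r,q)$ is connected; that lemma also supplies the word ``integral connected'' in the statement.

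The main obstacle I expect is the injectivity (dimension) step, namely justifying $\dim \CC(k,q/r)=ord_n(r)$ so that distinct $\gamma$ give distinct codewords. I would argue it directly: if $\Tr_{q/r}(\gamma\omega^{ik})=0$ for all $i$, then $\Tr_{q/r}(\gamma z)=0$ for all $z$ in the $\ff_r$-span of $\langle\omega^k\rangle$. That span is the field $\ff_r(\omega^k)$, whose degree over $\ff_r$ is $ord_n(r)=h$, so it equals $\ff_q$. Nondegeneracy of the trace form then forces $\gamma=0$. The fibre count in the subgroup identity is routine by comparison.
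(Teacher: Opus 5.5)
Your proof is correct. Its skeleton is the paper's: both $w(c_\gamma)$ and $\lambda_\gamma$ are written through the Gaussian period $\eta_i^{(k_r,q)}$ of the coset $C_i^{(k_r,q)}$ containing $\gamma$, and that period is then eliminated.

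The difference is that the paper imports both ingredients, whereas you derive them. The paper takes $w(c_\gamma)=\frac{r-1}{rk}\big(q-1-k_r\eta_i^{(k_r,q)}\big)$ from equation (12) of \cite{DY} and the eigenvalues from Theorem~\ref{teo: SpecGkq}. You obtain the Ding--Yang formula from orthogonality plus the identity $\ff_r^*\langle\omega^k\rangle=\langle\omega^{k_r}\rangle$ with fibres of size $(r-1)k_r/k$, which also makes clear why $k_r$ is the right parameter. Your character description of $Cay(\ff_q,\langle\omega^{k_r}\rangle)$ then gives the eigenvalues already indexed by $\gamma\in\ff_q$, as the statement uses.

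For the multiplicities, the paper calls the case $\gamma\neq0$ ``clear'' and gets $m(\lambda_0)=1$ from connectedness (Lemma~\ref{lem: Gkrq conexo}). You instead prove, using $q=r^h$, that $\gamma\mapsto c_\gamma$ is a bijection $\ff_q\to\CC(k,q/r)$. This is exactly what converts a count over $\gamma$ into the frequency $A_w$. Then $m(\lambda_0)=A_0=1$, and hence connectedness, comes out as a consequence rather than an input.

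The paper's route is shorter because it relies on the literature. Yours is self-contained and justifies the step the paper leaves implicit.
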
 

\begin{proof}
Let $c_{\gamma} \in \CC(k,q/r)$ for $\gamma\in \ff_{q}$.  
Thus, if $\gamma=0 $ then $w(c_{\gamma})=w(0)=0$, by \eqref{eq: Ckqs}.
On the other hand, if $\gamma\neq 0$, then $\gamma \in C_{i}^{(k_r,q)}$ (see \eqref{eq: gaussian periods}), 
and thus from equation (12) in \cite{DY} we have
\begin{equation} \label{weight c}
	w(c_{\gamma})= \tfrac{r-1}{rk} \big(q - 1 - k_r \cdot \eta_{i}^{(k_r,q)} \big).
\end{equation}
	
By Theorem \ref{teo: SpecGkq}, the eigenvalues of $\G(k_r,q)$ are 
	$$ \lambda=n_r \qquad \text{or} \qquad \lambda_{\gamma} = \eta_{i}^{(k_r,q)}$$ if $\gamma \in C_i^{(k_r,q)}$.
Putting this in \eqref{weight c} we get \eqref{Pesoaut} for $\gamma \ne 0$. 
But \eqref{Pesoaut} also holds for $\lambda=\frac{q-1}{k_r}=n_r$ and $w=0$, as we wanted.
	
The assertion about the multiplicities of the eigenvalues $\lambda_{\gamma}$ with $\gamma \ne 0$ is clear. 
For $\gamma=0$, we have $c_\gamma=0$, so $\lambda_0=n_r$. 
Every $n$-regular graph $\G$ has $n$ as one of its eigenvalues, with multiplicity equal to the number of connected components of the graph. 
Therefore, since $\G$ is connected by Lemma \ref{lem: ord_n conditions}, the multiplicity of $\lambda_0$ equals $A_{w(0)}=A_0=1$. 
\end{proof}

As a direct consequence, we obtain the following result which relates the weigths of two ICCs corresponding to the same GP-graph.

\begin{coro} \label{coro: WC WC' weights}
Let $p$ be a prime, $q=p^m, r=p^s$ with $s\mid m$ and $k,k'$ different divisors of $q-1$. 
Let $\CC=\CC(k,q/r)$ and $\CC'= \CC(k',q/r)$ be two $r$-ary ICCs as defined in \eqref{eq: Ckqs} 
of dimension $h=ord_{n}(p^{s})=ord_{n'}(p^{s})$ and lengths $n$ and $n'$, respectively. 
If $k_r=k_r'$, in the notation of \eqref{eq: k_r}, then 
\begin{equation} \label{eq: rel k k' weights}
	w(c_{\gamma}) = \tfrac{k'}{k} w(c'_{\gamma}),
\end{equation}
where $c_{\gamma}\in \CC$ and $c'_{\gamma}\in \CC$ are the words corresponding to $\gamma \in \ff_{q}$.
\end{coro}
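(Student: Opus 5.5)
The plan is to apply Theorem~\ref{teo: rel Ckq Gkq} to both codes and compare the two instances of \eqref{Pesoaut}. Both $\CC$ and $\CC'$ are defined over the same extension $\ff_q/\ff_r$. Their dimensions are $h=ord_n(r)=ord_{n'}(r)$, and for the minimal setting we need $q=r^h$. The corollary implicitly works in the situation of the theorem, so I will assume $q=r^h$. I would check that this is the intended setting; at worst, the argument below only uses formula \eqref{weight c} from \cite{DY}, which holds for any $k\mid q-1$ without the minimality assumption.

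Fix $\gamma\in\ff_q$. If $\gamma=0$, both words are zero and \eqref{eq: rel k k' weights} is trivial. If $\gamma\neq 0$, then $\gamma$ lies in a unique coset $C_i^{(k_r,q)}$ of $\langle\omega^{k_r}\rangle$. Since $k_r=k'_r$, this coset is the same one used for both codes, so the same Gaussian period $\eta_i^{(k_r,q)}$ appears in both. Equivalently, $\lambda_\gamma$ is the same eigenvalue of the single graph $\G(k_r,q)=\G(k'_r,q)$.

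By \eqref{weight c}, or equivalently by the second form of \eqref{Pesoaut}, we get
\[ k\,w(c_\gamma)=\tfrac{r-1}{r}\bigl(q-1-k_r\lambda_\gamma\bigr)=k'\,w(c'_\gamma). \]
The middle expression depends only on $q$, $r$, $k_r$ and $\gamma$, and not on $k$ or $k'$. Equating the two ends and dividing by $k$ gives $w(c_\gamma)=\tfrac{k'}{k}w(c'_\gamma)$.

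The only delicate point is that $\lambda_\gamma$ must really be the same quantity for both codes. This requires that the same primitive element $\omega$ is used to define both $\CC$ and $\CC'$, so that the cosets $C_i^{(k_r,q)}$ and the indexing by $\gamma$ agree. With that fixed, the rest is a direct computation.
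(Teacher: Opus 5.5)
Your proposal is correct and is essentially the paper's proof: since $k_r=k_r'$, both instances of \eqref{Pesoaut} involve the same eigenvalue $\lambda_\gamma$ of $\G(k_r,q)$, and equating them gives $k\,w(c_\gamma)=k'\,w(c'_\gamma)$. Your side remark is also accurate: only the weight formula \eqref{weight c} is needed, so the minimality assumption $q=r^h$ plays no role in this step.
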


\begin{proof}
Since $\gcd(\frac{q-1}{r-1},k)=\gcd(\frac{q-1}{r-1},k')=k_r$, by Theorem \ref{teo: rel Ckq Gkq} we have that
	$$	\lambda_{\gamma} = \tfrac{q-1}{k_r} -\tfrac{k r}{k_r(r-1)} w(c_{\gamma})= \tfrac{q-1}{k_r} -\tfrac{k' r}{k_r(r-1)} 			 w(c'_{\gamma}) $$
where $c_{\gamma}\in \CC$ and $c'_{\gamma}\in \CC$. By a straightforward computation we arrive to
	$$	\tfrac{k}{k_r}w(c_{\gamma})=\tfrac{k'}{k_r}w(c'_{\gamma}), $$
which implies \eqref{eq: rel k k' weights}, 
as asserted.
\end{proof}

As a particular case, we obtain the following result in the $p$-ary case, relating the weight distribution of $\CC(k,q)$ with the spectrum of the GP-graph $\G(k,q)$.
We state it separately since the $p$-ary case is the most common in the literature. 

\begin{coro}[$p$-ary case] \label{coro: p-ary}
Let $p, n \in \N$ with $p$ prime and $(n,p)=1$. 
If $q=p^{m}$ with $m=ord_{n}(p)$ and $k=\frac{q-1}{n}$ with $k \mid\frac{q-1}{p-1}$,
then the spectrum of $\G(k,q)$ and the weight distribution of $\CC(k,q)$ determine each other.
More precisely, if $\gamma\in \ff_{q}$, then the eigenvalue $\lambda_{\gamma}$ of 
$\G(k,q)$ and the weight of $c_{\gamma} \in \CC(k,q)$ satisfy
\begin{equation} \label{Pesoaut2}
	\lambda_{\gamma} = n -\tfrac{p}{p-1} w(c_{\gamma}) 
	\qquad \text{and} \qquad  m(\lambda_{\gamma}) = A_{w(c_\gamma)},
\end{equation}
where $m(\lambda_{\gamma})$ is the multiplicity of $\lambda_{\gamma}$ and $A_{w(c_\gamma)}$ is the frequency of $w(c_\gamma)$.
\end{coro}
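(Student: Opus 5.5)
This is the specialization of Theorem~\ref{teo: rel Ckq Gkq} to $r=p$, i.e.\ $s=1$, so the plan is to check the hypotheses of that theorem and then simplify its formula.

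First, the data fit the setting of \eqref{eq: r,q}--\eqref{eq: k}. We have $(n,p)=1$, $m=ord_n(p)$ and $q=p^m$. Then $n\mid q-1$, so $k=\frac{q-1}{n}$ is a positive divisor of $q-1$. With $r=p$ we get $h=ord_n(r)=m$, hence $q=r^h$, which is exactly the hypothesis of Theorem~\ref{teo: rel Ckq Gkq}. Also $\CC(k,q/p)=\CC(k,q)$ by the notation fixed in the paper.

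Second, compute $k_r$ for $r=p$. By definition $k_p=\gcd(k,\frac{q-1}{p-1})$. The hypothesis $k\mid\frac{q-1}{p-1}$ gives $k_p=k$, so the associated graph is $\G(k_p,q)=\G(k,q)$. By Lemma~\ref{lem: Gkrq conexo}, or directly by Theorem~\ref{teo: SpecGkq}~$(a)$–$(b)$ using $m=ord_n(p)$ and $k\mid\frac{q-1}{p-1}$, this graph is connected and integral.

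Third, substitute $r=p$ and $k_r=k$ into \eqref{Pesoaut}. This gives
$$\lambda_\gamma=\tfrac{q-1}{k}-\tfrac{kp}{k(p-1)}\,w(c_\gamma)=n-\tfrac{p}{p-1}\,w(c_\gamma).$$
The equality of multiplicities and frequencies, $m(\lambda_\gamma)=A_{w(c_\gamma)}$, is also part of Theorem~\ref{teo: rel Ckq Gkq}, with $\lambda_0=n$ having multiplicity $A_0=1$. Because the map $w\mapsto n-\frac{p}{p-1}w$ is an injective affine map, each of the two distributions determines the other.

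There is no real obstacle. The only step needing care is noticing that the divisibility hypothesis $k\mid\frac{q-1}{p-1}$ forces $k_r=k$ and that $m=ord_n(p)$ gives $h=m$.
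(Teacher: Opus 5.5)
Your proposal is correct and matches the paper, which presents this corollary without a separate proof as the special case $r=p$ of Theorem~\ref{teo: rel Ckq Gkq}. There, $h=ord_n(p)=m$ gives $q=r^h$, and the hypothesis $k\mid\frac{q-1}{p-1}$ forces $k_r=k$, so \eqref{Pesoaut} reduces to \eqref{Pesoaut2} and the multiplicity--frequency identity carries over unchanged.
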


We finish the section with two important remarks relative to the spectral correspondence that we have exhibited.
 
\begin{rem}[\textit{$t$-weight codes and graphs}]
An obvious consequence of the previous theorem is that, if the $\CC(k,q/r)$ is a $t$-weight code, then $\G(k_r,q)$ has $t+1$ different eigenvalues. Suppose that the eigenvalues of $\G(k_r,q)$ are
	$$\lambda_1 > \lambda_2 > \cdots > \lambda_{t+1}$$
and the weights of $\CC(k,q/r)$ are
	$$ w_0 < w_1 < \cdots < w_t.$$
We saw that $w_i$ is related with $\lambda_{i+1}$ for $i=0,\ldots,t$. For instance, the trivial weight $w_0=0$ corresponds with the principal eigenvalue $\lambda_1=n$ (the regularity degree), while the minimal weight $w_1=w_{min}=d$ (the minimum distance) corresponds with the second largest eigenvalue $\lambda_2$, and the maximal weight $w_t=w_{max}$ with the smallest eigenvalue $\lambda_{t+1}$.
\end{rem}

\begin{rem}[\textit{The relation between ICCs and GP-graphs}] \label{rem: ICCs GPGs}
Not every GP-graph $\G(k,q)$ corresponds to an irreducible cyclic code. 
Indeed, by Lemma~\ref{lem: ord_n conditions} and Theorem~\ref{teo: rel Ckq Gkq},
the GP-graph $\G(k,q)$ (with $t+1$ eigenvalues), corresponding to a ($t$-weight) irreducible code, is connected and integral.
Hence,
	$$ \big\{ \text{Irreducible cyclic codes} \big\} \quad \rightsquigarrow \quad \big\{ \text{Integral connected GP-graphs} \big\}$$
Furthermore, given $\G(k,q)$ a connected integral GP-graph,  hence with $k\mid\frac{q-1}{p-1}$, there is an irreducible cyclic code (not unique, see Theorem \ref{teo: r-ary corresp}) which corresponds to $\G(k,q)$. Indeed, the $p$-ary irreducible cyclic code $\CC(k,q)$ is one of them. To deepen into this correspondence is the goal of the next section.
\end{rem}

\section{The spectral correspondence and weight enumerators} \label{sec: spectral correspondence}
In this section, we see that the spectral correspondence between irreducible cyclic codes and GP-graphs is not a bijection at all.
Fix an extension $\ff_r$ of $\ff_p$. 
We will first characterize all those $r$-ary ICCs  
spectrally corresponding to an integral connected given GP-graph $\G(k,q)$. Then, we give this relation in terms of weight enumerators of the codes.

\subsection{The spectral correspondence} \label{subsec: correspondence}
To give the spectral correspondence in detail, we first, we need the following technical result.

\begin{lem} \label{eq: order correspondence r}
	Let $q=p^m=r^{h}$ be a prime power and let $n\mid q-1$ such that $m=ord_{n}(p)$, 
	and let $k=\frac{q-1}{n}$ with $k \mid\frac{q-1}{r-1}$.
	If $k' n'=p^m-1$ for some $k',n' \in \N$ such that $k=\gcd(\frac{q-1}{p-1},k')$,
	then $k'=kt$ with $t\mid r-1$, $\gcd(t, \tfrac{n}{r-1})=1$ and $ord_{n'}(r)=h$. 
\end{lem}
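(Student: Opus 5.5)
The plan is to work entirely with the factorization $q-1=kn$ and reduce the order statement to Lemma~\ref{lem: ord_n conditions}. First I record two divisibilities. Since $k\mid \frac{q-1}{r-1}$, we have $r-1\mid n$. Since $p-1\mid r-1$, both $\frac{n}{r-1}$ and $\frac{n}{p-1}$ are integers, and
$$\tfrac{q-1}{p-1}=k\cdot\tfrac{n}{p-1}, \qquad \tfrac{q-1}{r-1}=k\cdot \tfrac{n}{r-1}.$$
The hypothesis $k=\gcd(\frac{q-1}{p-1},k')$ gives $k\mid k'$, so I write $k'=kt$ with $t\in\N$. From $kt\,n'=q-1=kn$ we get $n=tn'$, and hence $t\mid n$.

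Next I use the gcd hypothesis once more:
$$k=\gcd\big(k\tfrac{n}{p-1},\,kt\big)=k\gcd\big(\tfrac{n}{p-1},t\big),$$
so $\gcd(t,\frac{n}{p-1})=1$. Since $t\mid n=(p-1)\cdot\frac{n}{p-1}$ and $t$ is coprime to the second factor, it follows that $t\mid p-1$, and therefore $t\mid r-1$. Moreover, $\frac{n}{r-1}$ divides $\frac{n}{p-1}$, so $\gcd(t,\frac{n}{r-1})$ divides $\gcd(t,\frac{n}{p-1})=1$. This gives the first two claims.

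For the order claim I apply Lemma~\ref{lem: ord_n conditions} with $k'$ in place of $k$; this is legitimate since $k'\mid q-1$ and $m=sh$. The key computation is
$$k'_r=\gcd\big(k',\tfrac{q-1}{r-1}\big)=\gcd\big(kt,\,k\tfrac{n}{r-1}\big)=k\gcd\big(t,\tfrac{n}{r-1}\big)=k.$$
It follows that $n'_r=\frac{q-1}{k'_r}=n$. Since $ord_{n}(p)=m$ by hypothesis, the lemma (direction $\Leftarrow$) yields $ord_{n'}(r)=h$.

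The only delicate point I expect is to derive $t\mid r-1$ by first passing through $t\mid p-1$, using coprimality with $\frac{n}{p-1}$. A direct argument is not available: $k\mid\frac{q-1}{r-1}$ alone does not force anything about $t$, and it is the stronger gcd condition with $\frac{q-1}{p-1}$ that does the work. The rest is routine gcd bookkeeping.
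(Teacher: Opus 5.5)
Your proof is correct for the lemma as literally stated. Its skeleton is the paper's: write $k'=kt$, get a coprimality condition from the gcd hypothesis, combine it with $t\mid n$, and finish with Lemma~\ref{lem: ord_n conditions} via $k'_r=k$ and $n'_r=n$.

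The difference lies in the hypothesis. The paper's proof starts from $\gcd(k',\frac{q-1}{r-1})=k$, which is also the form invoked in Theorem~\ref{teo: r-ary corresp}. So the $\frac{q-1}{p-1}$ in the statement appears to be a slip for $\frac{q-1}{r-1}$.

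The paper's hypothesis is strictly weaker than yours. Yours implies it, because $\frac{q-1}{r-1}\mid\frac{q-1}{p-1}$ and $k\mid\frac{q-1}{r-1}$. Under the weaker hypothesis your detour through $t\mid p-1$ is no longer valid. Take $p=2$, $r=4$, $q=16$, $k=1$, $k'=3$. Then $\gcd(3,\frac{15}{3})=1=k$ and all other hypotheses hold ($ord_{15}(2)=4$). But $t=3\nmid p-1$, and $\gcd(3,\frac{15}{1})=3\neq k$, so your hypothesis fails here.

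The detour is also unnecessary. From $k=\gcd(kt,k\frac{n}{r-1})$ you get $\gcd(t,\frac{n}{r-1})=1$ at once. Then $t\mid n=(r-1)\frac{n}{r-1}$ gives $t\mid r-1$, and this is exactly the paper's argument.

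So your closing remark is off: it says a direct argument is unavailable and that the gcd condition with $\frac{q-1}{p-1}$ ``does the work.'' The condition with $\frac{q-1}{r-1}$ already suffices. It is also the version needed so that Theorem~\ref{teo: r-ary corresp} covers codes such as $\CC(3,16/4)$, which has length $5$, $ord_5(4)=2$, and is spectrally attached to $\G(1,16)$. Your stronger hypothesis would exclude such codes.
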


\begin{proof}
Since $\gcd(k',\frac{q-1}{r-1})=k$ we have that $k\mid k'$ and hence there exists $t\in \mathbb{N}$ such that $k'=kt$, it is enough to see that $t$ is a divisor of $r-1$ and coprime with $\frac{n}{r-1}$.
	
Since $k\mid \frac{q-1}{r-1}$, we have that $\frac{q-1}{r-1}=kT$ for some $T\in \mathbb{N}$, moreover from the definition of $n$ we obtain that $T=\frac{n}{r-1}$.
Hence, we have 
	$$
	k=\gcd(k',\tfrac{q-1}{r-1})=\gcd(kt,kT)=k\gcd(t,T)
	$$
which implies that $\gcd(t,T)=1$ and so $\gcd(t,\frac{n}{r-1})=1$, as asserted.
	
On the other hand, since $k'\mid q-1$ we have that $q-1=k'\ell=kt\ell$ for some $\ell \in \mathbb{N}$, which implies that 
	$$
	t\ell=\tfrac{q-1}{k(r-1)}(r-1)=\tfrac{n}{r-1}(r-1)
	$$
hence $t\mid \frac{n}{r-1}(r-1)$, since $\gcd(t,\frac{n}{r-1})=1$ we obtain that $t\mid r-1$ as desired.
	
Finally, since $k=\gcd(k',\frac{q-1}{r-1})$, then $n'_r=\frac{q-1}{k'_r}=\frac{q-1}{k}=n$ and so by hypothesis we obtain that
	$$
	ord_{n'_r}(p)=ord_{n}(p)=m.
	$$
Therefore, Lemma \ref{lem: ord_n conditions} implies that $ord_{n'}(r)=h$, as asserted.
\end{proof}

In order to study further the spectral correspondence given by Theorem \ref{teo: rel Ckq Gkq}, we need the following definition.
\begin{defi}
Let $q=p^{m}$ be a prime power of $p$ and let $k\in \mathbb{N}$ be a divisor of $\frac{q-1}{p-1}$. We define the \textit{primitive height of $k$ with respect to $p$}, denoted by 
	$$ h_p(k) $$ 
to be the maximum $s\in \N$ such that $k\mid \frac{q-1}{p^{s}-1}$.
\end{defi}
Notice that $h_p(k)$ must necessarily be a divisor of $m$ in the above definition.
Given $t\in \N$, we denote by ${\Div}^{+}(t)$ be the set of all positive divisors of $t$. 

We can now give the promised relation. More precisely, the next result describes all $r$-ary irreducible cyclic codes which are spectrally related with a single integral connected GP-graph.

\begin{thm} \label{teo: r-ary corresp} 
Let $\G(k,q)$ be an integral connected GP-graph with $q=p^{m}$, that is $kn=q-1$ with $q=p^{m}$, $k\mid \frac{q-1}{p-1}$ and $n\dagger q-1$.
In the spectral correspondence given by Theorem~\ref{teo: rel Ckq Gkq}, 
the graph $\G(k,q)$ determines the spectra of all the $r$-ary irreducible cyclic codes in the set 
	$$ \big\{ \CC(kt,q) \,:\, t \mid r-1 \text{ and } \gcd(t,\tfrac{n}{r-1})=1 \big\},$$ 
where $r=p^{s}$ with $s\in \Div^+(h_p(k))$.
\end{thm}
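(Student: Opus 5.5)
Fix $s\in\Div^+(h_p(k))$ and $r=p^s$. The plan is to show that every code in the displayed set is an $r$-ary ICC in the minimal representation of Remark~\ref{rem: Ckr}, with $k_r$ (computed for the parameter $kt$) equal to $k$. Once this is done, Theorem~\ref{teo: rel Ckq Gkq} applies directly with graph $\G(k,q)$. Here $\CC(kt,q)$ is understood as the $r$-ary code $\CC(kt,q/r)$.

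\emph{Step 1: $\ff_q$ is an extension of $\ff_r$ and $k\mid\frac{q-1}{r-1}$.} By the definition of $h_p(k)$ we have $h_p(k)\mid m$, hence $s\mid m$ and $q=r^h$ with $h=m/s$. Moreover $s\mid h_p(k)$ gives $p^s-1\mid p^{h_p(k)}-1$, so
\[
k \;\Big|\; \tfrac{q-1}{p^{h_p(k)}-1} \;\Big|\; \tfrac{q-1}{r-1}.
\]
Thus all hypotheses of Lemma~\ref{eq: order correspondence r} on $(k,n)$ are in place, with $m=ord_n(p)$ coming from the connectedness of $\G(k,q)$ via Theorem~\ref{teo: SpecGkq}(a).

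\emph{Step 2: the code $\CC(kt,q/r)$ is well defined with $k_r=k$.} Take $t\mid r-1$ with $\gcd(t,\frac{n}{r-1})=1$, and set $k'=kt$ and $T=\frac{n}{r-1}=\frac{q-1}{k(r-1)}$. Since $t\mid r-1$, we get $kt\mid k(r-1)T=q-1$. Also
\[
\gcd\!\big(kt,\tfrac{q-1}{r-1}\big)=\gcd(kt,kT)=k\gcd(t,T)=k,
\]
so $k'_r=k$ and $n'_r=n$. This is the converse direction of Lemma~\ref{eq: order correspondence r}.

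\emph{Step 3: minimality.} We need $ord_{n'}(r)=h$ with $n'=\frac{q-1}{kt}$. This follows from Lemma~\ref{lem: ord_n conditions}, because $ord_{n'_r}(p)=ord_n(p)=m$; it is exactly the last part of the proof of Lemma~\ref{eq: order correspondence r}. Theorem~\ref{teo: rel Ckq Gkq} now applies to $\CC(kt,q/r)$ with graph $\G(k'_r,q)=\G(k,q)$. It gives
\[
w(c_\gamma)=\tfrac{r-1}{r}\big(n'-\tfrac{1}{t}\lambda_\gamma\big)
\qquad\text{and}\qquad A_{w(c_\gamma)}=m(\lambda_\gamma),
\]
so $Spec(\G(k,q))$ determines the weight distribution. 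Conversely, Lemma~\ref{eq: order correspondence r} shows that any $r$-ary ICC $\CC(k',q/r)$ with $k'_r=k$ has $k'=kt$ for such a $t$, so the set is exactly the family corresponding to $\G(k,q)$. Here one needs $k\mid\frac{q-1}{r-1}$, which restricts $r$ to $p^s$ with $s\mid h_p(k)$ by the maximality in the definition of $h_p$.

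\emph{Main obstacle.} The delicate point is showing that the allowed $r$ are exactly those with $s\in\Div^+(h_p(k))$. We need that $k\mid\frac{q-1}{p^s-1}$ with $s\mid m$ forces $s\mid h_p(k)$. I would try to prove this by noting that $k\mid\frac{q-1}{p^{s}-1}$ and $k\mid\frac{q-1}{p^{s'}-1}$ imply $k\mid\frac{q-1}{p^{\mathrm{lcm}(s,s')}-1}$. That is an lcm/gcd identity for $\gcd(p^a-1,p^b-1)=p^{\gcd(a,b)}-1$, applied to the cofactors. For the statement as written only the direction $s\mid h_p(k)\Rightarrow k\mid\frac{q-1}{r-1}$ is needed, and it is immediate.
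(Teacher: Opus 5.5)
Your argument for the statement as written is correct and follows the paper's proof. The identity $\gcd(kt,kT)=k\gcd(t,T)=k$ with $T=\frac{n}{r-1}$ gives $k_r=k$ for $\CC(kt,q/r)$, and Theorem~\ref{teo: rel Ckq Gkq} then applies with graph $\G(k,q)$. You also make explicit two points the paper only asserts: $kt\mid q-1$, and the minimality condition $ord_{n'}(r)=h$, which you derive from Lemma~\ref{lem: ord_n conditions} and $ord_n(p)=m$.

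Your side claim in Step 3 is false, however. You say that $k\mid\frac{q-1}{r-1}$ ``restricts $r$ to $p^s$ with $s\mid h_p(k)$ by maximality'', but maximality only gives $s\le h_p(k)$. The lcm argument in your last paragraph also cannot work. From $k\mid\frac{q-1}{p^s-1}$ and $k\mid\frac{q-1}{p^{s'}-1}$ you only get $k\mid\frac{q-1}{\mathrm{lcm}(p^s-1,p^{s'}-1)}$. In general $\mathrm{lcm}(p^s-1,p^{s'}-1)$ is a proper divisor of $p^{\mathrm{lcm}(s,s')}-1$; for example $\mathrm{lcm}(3,7)=21\neq 63$.

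Here is a concrete counterexample. Take $p=2$, $m=6$, $k=3$, $n=21$:
\begin{itemize}
\item $\G(3,64)$ is integral and connected, since $3\mid 63$ and $ord_{21}(2)=6$.
\item $3$ divides $\frac{63}{1}$, $\frac{63}{3}$ and $\frac{63}{7}$ but not $\frac{63}{63}$, so $h_2(3)=3$.
\item Yet for $s=2\notin\Div^+(3)$, the $4$-ary codes $\CC(3,64/4)$ and $\CC(9,64/4)$ satisfy $ord_{21}(4)=ord_7(4)=3$ and $k_r=\gcd(3,21)=\gcd(9,21)=3$.
\end{itemize}
So these two $4$-ary codes are also spectrally governed by $\G(3,64)$.

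The paper's proof makes the same incorrect claim, namely ``$\gcd(k,\frac{q-1}{p^s-1})\ne k$ for any $s\notin\Div^+(h_p(k))$''. The result is therefore valid only in the form you actually proved. That is the forward direction, together with the converse for a fixed admissible $r$ via Lemma~\ref{eq: order correspondence r}. The correct set of admissible $s$ is $\{s\mid m : k\mid\frac{q-1}{p^s-1}\}$, which is closed under taking divisors but not under lcm. You were right that exhaustiveness over $s$ is not needed for the statement, but you should drop that claim rather than try to prove it.
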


\begin{proof}
For any $s \in \Div^{+}(h_{p}(k))$, since $k\mid \frac{q-1}{p^{s}-1}$, we have that 
	$$
		\gcd(k,\tfrac{q-1}{p^{s}-1})=k.
	$$
Notice that $\gcd(k,\tfrac{q-1}{p^{s}-1})\neq k$ for any $s\not \in \Div^{+}(h_{p}(k))$. 
Thus, given $s\not \in \Div^{+}(h_{p}(k))$ any $p^{s}$-ary ICC is not spectral related with $\G(k,q)$. 
So, we can assume that $s\in \Div^{+}(h_{p}(k))$. 

Now, given $s\in \Div^{+}(h_p(k))$, notice that for all the $r$-ary irreducible cyclic codes $\CC(kt,q)$ with $t \mid r-1$,  $\gcd(t,\frac{n}{r-1})=1$ and  $r=p^{s}$, the parameters yield
	$$
		\gcd(kt, \tfrac{q-1}{r-1}) = \gcd(kt, k\tfrac{n}{r-1}) = k\gcd(t,\tfrac{n}{r-1}) = k.
	$$
Since $ord_{n'}(r)=h$ with $q=r^h$, Theorem \ref{teo: rel Ckq Gkq} implies that the spectra of $\CC(kt,q)$ is determined by the GP-graph $\G(k,q)$.

On the other hand, if $\CC(k',q/r)$ is an irreducible cyclic code of length  $n'=\frac{q-1}{k}$ and dimension $h=ord_{n'}(r)$ satisfying that $q=r^h$ such that its spectra is determined by $\G(k,q)$, then we have that
	$$
		\gcd(k',\tfrac{q-1}{r-1})=k.
	$$
Thus, Lemma \ref{eq: order correspondence r} implies that 
	$$ \CC(k',q/r) = \CC(kt,q) $$ 
with $t \mid r-1$ and $\gcd(t,\frac{n}{r-1})=1$, 
where $r=p^{s}$ with $s\in \Div_p^+(h(k))$, as asserted.
\end{proof}

We now illustrate the corollary in a simple situation with $p$-ary codes.

\begin{exam} \label{exam: C(19,343)}
Let $p=7$ and $m=3$, the set of positive divisors of $p-1=6$ is $\{1,2,3,6\}$. If we take $k=19$ then $19$ divides $\frac{7^{3}-1}{7-1}=57$ and $18=\frac{342}{19}$ is a primitive divisor of $7^{3}-1$ which implies that the GP-graph 
	$\G(19,343)$ 
is integral and connected. 
By Corollary \ref{teo: r-ary corresp}, the spectrum of $\G(19,343)$ determines the spectra of the following $7$-ary irreducible cyclic codes 
	$\CC(19,343)$ and 
	$\CC(38,343)$. 
Notice that these codes have different lengths, namely $18$ and $9$, respectively.
\hfill $\diamond$
\end{exam}

\subsection{Weight enumerators} \label{subsec: W(x)}
Now, we study the equivalence of irreducible cyclic codes, in the spectral correspondence with GP-graphs, in terms of weight enumerators.

First, as a direct consequence of the main Theorem \ref{teo: rel Ckq Gkq}, we have the following relation between the weight enumerators of two $r$-ary irreducible cyclic codes of different lengths but the same dimension. 

\goodbreak 

\begin{prop} \label{prop: WC WC'}
Let $p$ be a prime, 
$q=p^m, r=p^s$ with $s\mid m$ and $k,k'$ different divisors of $q-1$. 
Let $\CC=\CC(k,q/r)$ and $\CC'= \CC(k',q/r)$ be two $r$-ary ICCs as defined in \eqref{eq: Ckqs} 
of dimension $h=ord_{n}(p^{s})=ord_{n'}(p^{s})$ and lengths $n$ and $n'$, respectively. 
If $k_r=k_r'$, in the notation of \eqref{eq: k_r},
then 
\begin{equation} \label{eq: rel k k'}
	W_{\CC}(x^{k}) = W_{\CC'}(x^{k'}) ,
\end{equation}
or, equivalently, 
$W_{\CC}(x) = W_{\CC'}(x^{\frac{k'}k})$.
\end{prop}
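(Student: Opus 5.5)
The plan is to deduce the identity directly from Corollary~\ref{coro: WC WC' weights}, which gives the pointwise relation $w(c_\gamma) = \tfrac{k'}{k}\, w(c'_\gamma)$, equivalently $k\, w(c_\gamma) = k'\, w(c'_\gamma)$, for every $\gamma \in \ff_q$. Both codes are indexed by the same parameter set $\gamma \in \ff_q$ through \eqref{eq: Ckqs}, so the natural approach is to write each weight enumerator as a sum over $\gamma$ and compare the two sums term by term.

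First, I check that the map $\gamma \mapsto c_\gamma$ is injective for both codes, so that the sum over $\gamma$ counts each codeword exactly once. Both codes have dimension $h$ with $q=r^h$, hence each has $r^h = q$ codewords. The map $\ff_q \to \CC$, $\gamma \mapsto c_\gamma$, is $\ff_r$-linear and surjective by definition. Since $|\ff_q| = q = |\CC|$, it is a bijection, and the same holds for $\CC'$. Consequently
\[
W_\CC(x) = \sum_{\gamma \in \ff_q} x^{w(c_\gamma)} \qquad \text{and} \qquad W_{\CC'}(x) = \sum_{\gamma\in\ff_q} x^{w(c'_\gamma)}.
\]

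Next, I substitute $x^k$ into the first expression, which gives $W_\CC(x^k) = \sum_\gamma x^{k\,w(c_\gamma)}$. By Corollary~\ref{coro: WC WC' weights} this equals $\sum_\gamma x^{k' w(c'_\gamma)} = W_{\CC'}(x^{k'})$, which is \eqref{eq: rel k k'}. The equivalent form follows by replacing $x$ with $x^{1/k}$, that is, by reading the identity as one between polynomials in a formal variable $y = x^k$. Alternatively, it follows directly from $w(c_\gamma) = \tfrac{k'}{k} w(c'_\gamma)$, which shows that $x^{w(c_\gamma)} = (x^{k'/k})^{w(c'_\gamma)}$. The exponent $k'/k$ may be fractional, so this last form should be stated as an identity of expressions in $x^{1/k}$, or equivalently via the first form.

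I expect the only delicate point to be the bijectivity of $\gamma \mapsto c_\gamma$, which is needed so that frequencies match exactly. This is settled by the dimension count above using the hypothesis $h = ord_n(p^s)$ and $q = r^h$. If that hypothesis were weakened (with $q$ a larger power of $r$), each codeword would be hit the same number $q/r^h$ of times in both codes, so the identity would still hold after dividing out this common factor.
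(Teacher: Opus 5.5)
Your proof is correct. Its first step coincides with the paper's: both get the pointwise relation $k\,w(c_\gamma)=k'\,w(c'_\gamma)$ from Corollary~\ref{coro: WC WC' weights}.

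The difference is in how the frequencies are matched. The paper invokes the multiplicity part of Theorem~\ref{teo: rel Ckq Gkq}. It reads the frequency of $w(c_\gamma)$ in $\CC$ and of $w(c'_\gamma)$ in $\CC'$ as the multiplicity of the same eigenvalue $\lambda_\gamma$ of $\G(k_r,q)$. You stay inside the codes and write each enumerator as a sum over $\gamma\in\ff_q$ through the parametrization $\gamma\mapsto c_\gamma$.

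Your route is more elementary. It makes explicit the counting behind the paper's ``clear'' multiplicity assertion, which itself rests on $\gamma\mapsto c_\gamma$ being a bijection when $q=r^h$. It is also slightly more general. The paper's argument needs the hypothesis $q=r^h$ of Theorem~\ref{teo: rel Ckq Gkq}, which the proposition leaves implicit. Your fibre count shows the identity persists when $q$ is a larger power of $r$, because the fibres in both codes have the common size $q/r^h$.

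For that extension, take the pointwise relation directly from the weight formula \eqref{weight c}, rather than from Corollary~\ref{coro: WC WC' weights}. That formula is a character-sum identity valid for any such $q$, whereas the paper proves the corollary through Theorem~\ref{teo: rel Ckq Gkq}.

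What the paper's route buys in exchange is the spectral interpretation that is the theme of the paper: both enumerators are read off from the single spectrum $Spec(\G(k_r,q))$.
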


\begin{proof}
By Corollary \ref{coro: WC WC' weights}, we know that
	$w(c_{\gamma})=\tfrac{k'}{k}w(c'_{\gamma})$.
Now, by Theorem \ref{teo: rel Ckq Gkq} we have that the frequencies of $w(c_{\gamma})$ and $w(c_{\gamma})$ are equal to the multiplicity of the eigenvalue $\lambda_{\gamma}$ of $\G(k_r,q)$, this implies \eqref{eq: rel k k'}, as asserted.
\end{proof}

\begin{exam} \label{exam: C19}
Let $\CC=\CC(19,343)$ and $\CC'=\CC(38,343)$ be the $7$-ary ICCs of Example~\ref{exam: C(19,343)}. 
As we saw, the spectra of these codes are determined by the GP-graph $\G(19,343)$, since in this case $\gcd(19,\frac{7^{3}-1}{7-1})=\gcd(38,\frac{7^{3}-1}{7-1})=19$. By the above proposition we have that
	$$
		W_{\CC}(x)=W_{\CC'}(x^{2}).
	$$
Thus, the weight enumerator of the $[18,3]$ cyclic code $\CC$ is obtained from the weight enumerator of the smaller $[9,3]$ cyclic code $\CC'$, just by squaring the variable.
\hfill $\diamond$
\end{exam}

\begin{rem} \label{rem: rel equiv}
Note that Proposition \ref{prop: WC WC'} defines a relation for $r$-ary linear codes. Namely, 
	$$ \CC \sim_W \CC' \qquad \Leftrightarrow \qquad W_{\CC}(x^{a})= W_{\CC'}(x^{b}) \text{\quad for some $a,b \in \N$}.$$

\noindent ($i$) 
It turns out that $\sim_W$ is an equivalence relation. 
Indeed, reflexivity and symmetry are obvious and the transitivity property is a consequence of the implication
	$$
		W_{\CC_1}(x^{a})=W_{\CC_2}(x^{b}), \quad W_{\CC_2}(x^{c})=W_{\CC_3}(x^{d}) \qquad \Rightarrow  \qquad W_{\CC_1}(x^{ac})= W_{\CC_3}(x^{bd}).
	$$

\noindent ($ii$) 
This equivalence relation is weaker than the standard equivalence relation $\sim$ of linear codes. In fact, it is clear than $\CC \sim \CC'$ implies that $\CC \sim_W \CC'$. The converse, however, is not true in general. Indeed, we have 
	$$ \CC(19,343) \sim_{W} \CC(38,343) \qquad \text{but} \qquad \CC(19,343) \nsim \CC(38,343),$$ 
since they do not have the same length (see Example \ref{exam: C19}).
\end{rem}

As a direct consequence of Theorem  \ref{teo: r-ary corresp} and Proposition \ref{prop: WC WC'}, we have the following result.

\begin{coro} \label{coro: Ckqr=Ckrtqr}
Let $p$ be a prime and $q,r,k,n\in \N$ as in \eqref{eq: r,q}--\eqref{eq: k} such that $q=r^{h}$ with $h=ord_{n}(r)$.
The weight enumerator of $\CC=\CC(k,q/r)$, as in \eqref{eq: Ckqs}, satisfies
\begin{equation} \label{eq: WCx=Wcxt}
	W_{\CC}(x)=W_{\CC_{r}}(x^{k_r/k}),
\end{equation}	
where $\CC_{r}=\CC(k_{r},q/r)$ with $k_{r}=\gcd(k,\frac{q-1}{r-1})$.
\end{coro}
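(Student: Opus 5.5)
The plan is to apply Proposition~\ref{prop: WC WC'} with $k' = k_r$, that is, to compare $\CC = \CC(k,q/r)$ with $\CC_r = \CC(k_r,q/r)$. Three hypotheses of that proposition must be checked: that both codes are $r$-ary irreducible cyclic codes traced down from the same field $\ff_q$; that they have the same dimension $h$; and that their associated parameters $k_r$ coincide. Once these hold, \eqref{eq: rel k k'} gives $W_{\CC}(x^{k}) = W_{\CC_r}(x^{k_r})$. Substituting $x \mapsto x^{1/k}$, or equivalently using the stated form $W_{\CC}(x) = W_{\CC'}(x^{k'/k})$ with $k' = k_r$, yields \eqref{eq: WCx=Wcxt}.

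First, the parameter condition. Since $k_r \mid \frac{q-1}{r-1}$, we get $\gcd(k_r, \frac{q-1}{r-1}) = k_r$, so the "$k_r$" attached to $\CC_r$ is $k_r$ itself, the same as for $\CC$. We also have $k_r \mid k \mid q-1$, so $\CC_r$ is a legitimate code of the form \eqref{eq: Ckqs}, with length $n_r = \frac{q-1}{k_r}$.

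Second, the dimension condition, which is the only nontrivial point. We need $ord_{n_r}(r) = h$. This is exactly equation \eqref{eq: ord n n'} in the proof of Lemma~\ref{lem: ord_n conditions}: since $n \mid n_r \mid q-1 = r^h - 1$, we get $h = ord_n(r) \le ord_{n_r}(r) \le h$. Alternatively, Lemma~\ref{lem: ord_n conditions} gives $ord_{n_r}(p) = m$, and applying the same lemma to $k_r$ (whose own $k_r$ is $k_r$) returns $ord_{n_r}(r) = h$.

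If $k_r = k$, the statement is trivial, since then $\CC_r = \CC$. Otherwise $k \neq k_r$ are different divisors of $q-1$, and Proposition~\ref{prop: WC WC'} applies directly. Nothing here is a real obstacle. The only care needed is to read the exponent $k_r/k$ correctly: the weights of $\CC_r$ are $k/k_r$ times those of $\CC$, by Corollary~\ref{coro: WC WC' weights}, so the substitution must be written accordingly.
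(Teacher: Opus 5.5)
Your proposal is correct and follows the paper's proof. Both apply Proposition~\ref{prop: WC WC'} with $k'=k_r$, using $\gcd(k_r,\tfrac{q-1}{r-1})=k_r=\gcd(k,\tfrac{q-1}{r-1})$; you also check the equal-dimension hypothesis $ord_{n_r}(r)=h$ and the degenerate case $k=k_r$, which the paper leaves implicit (it instead cites Theorem~\ref{teo: r-ary corresp} for the factorization $k=k_r t$, which is not actually needed).
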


\begin{proof}
By Theorem \ref{teo: r-ary corresp} we have that $k=k_r t$, with $t\mid r-1$ and $\gcd(t,\frac{n}{r-1})=1$. 
By taking into account that 
	$$ \gcd(k,\tfrac{q-1}{r-1}) = k_r = \gcd(k_r,\tfrac{q-1}{r-1}), $$ 
the equation \eqref{eq: WCx=Wcxt} follows directly from Proposition \ref{prop: WC WC'}.
\end{proof}

\section{From few weight irreducible cyclic codes to graphs} \label{sec: few weights}
In this section, we use Theorem \ref{teo: rel Ckq Gkq} to translate in terms of GP-graphs the notion of being a few weight irreducible cyclic code. 
A crystalline correspondence is only possible for $1$-weight and $2$-weight codes.
 
Given $\ell \in \mathbb{N}$, a code is called an \textit{$\ell$-weight code} if it has $\ell$ non-zero different weights.  
In \cite{SW}, Schmidt and White characterized all $2$-weight irreducible cyclic codes, by assuming the generalized Riemann hypothesis (GRH), 
they reduced the general computation to the $p$-ary case $\CC(k,q)$ with  $k\mid \frac{q-1}{p-1}$. As Vega pointed out in \cite{V}
this characterization also holds in the general case, that is $\CC(k',q/r)$ where not necessarily $k'\mid \frac{q-1}{r-1}$ with $r=p^{s}$ and $s\ge 1$.

\subsection{The GP-graphs from one and two-weight ICCs}
We begin by translating in terms of graphs which irreducible cyclic codes are $\ell$-weight codes for $\ell=1,2$.
We next show that there is a neat characterization for those GP-graphs associated to one-weight and two-weight irreducible cyclic codes. Two-weight ICCs corresponds to strongly regular GP-graphs.

We recall that a \textit{strongly regular graph} with parameters $(n,k,e,d)$ is a connected $k$-regular graph with $n$ vertices, such that if $N_v$ denotes the set of neighbors of $v$ then  for any pair of vertices $v,w$, the size $|N_v \cap N_w|$ is equal to $e$ or $d$ depending on if $v$ and $w$ are neighbors or not, respectively. In terms of eigenvalues, it is well known that any strongly regular graph has $3$ different eigenvalues and conversely any connected graph with three different eigenvalues is strongly regular.

The correspondence between one-weight and two-weight ICCs and GP-graphs is as follows.
\begin{prop} \label{prop: 2-weights}
Let $p,m',k',n,s \in \N$ with $p$ prime such that $k'\mid p^{m'}-1$, $r=p^{s}$ and put $n=\tfrac{p^{m'}-1}{k'}$.  Let $\CC(k',p^{m'}/r)$ be the $r$-ary irreducible cyclic code as in \eqref{eq: Ckqs}. 
If $m=hs$ with $h=ord_{n}(p^{s})$, $q=p^{m}$, $k=\frac{q-1}{n}$, and $k_r=\gcd(\frac{q-1}{r-1},k)$, then we have the following:
\begin{enumerate}[$(a)$]
	\item $\CC(k',p^{m'}/r)$ is a one-weight code if and only if $\G(k_r,q)$ is the complete graph $K_{q}$. \sk 
		
	\item $\CC(k',p^{m'}/r)$ is a two-weight code if and only if $\G(k_r,q)$ is a strongly regular graph.	
\end{enumerate}
\end{prop}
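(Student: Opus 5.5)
First I reduce to the minimal representation. By Theorem~\ref{teo: Ck Ck'}, since $h=ord_n(p^s)$ and $m=hs$, the code $\CC(k',p^{m'}/r)$ equals $\CC(k,q/r)$ with $q=r^h$ and $k=\frac{q-1}{n}$; both codes have length $n$. So I may work with $\CC=\CC(k,q/r)$, and Theorem~\ref{teo: rel Ckq Gkq} applies. By Lemma~\ref{lem: Gkrq conexo}, $\G(k_r,q)$ is connected and integral.

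By Theorem~\ref{teo: rel Ckq Gkq}, the map $w\mapsto \frac{q-1}{k_r}-\frac{kr}{k_r(r-1)}w$ is an injective affine map. It sends the weights occurring in $\CC$ bijectively onto the distinct eigenvalues of $\G(k_r,q)$, with $w=0$ going to the degree $\lambda_0=n_r$. Also, every eigenvalue arises as some $\lambda_\gamma$: by Theorem~\ref{teo: SpecGkq}, the eigenvalues are $n_r$ and the Gaussian periods $\eta_i^{(k_r,q)}$, each attained by some $\gamma\in C_i^{(k_r,q)}$. Hence $\CC$ is an $\ell$-weight code if and only if $\G(k_r,q)$ has exactly $\ell+1$ distinct eigenvalues. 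This counting statement is the key step.

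For part $(b)$, take $\ell=2$. A connected graph with exactly three distinct eigenvalues is strongly regular, and a strongly regular graph, which is connected by the definition recalled in the paper, has exactly three eigenvalues. Since $\G(k_r,q)$ is connected, $(b)$ follows directly.

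For part $(a)$, take $\ell=1$, so the condition is that the connected graph $\G(k_r,q)$ has exactly two distinct eigenvalues. I expect this to be the main point needing care. A connected regular graph with two distinct eigenvalues is complete: its adjacency matrix $A$ satisfies $(A-\lambda_0 I)(A-\lambda_1 I)=0$, so $A^2$ is a linear combination of $A$ and $I$. This gives diameter at most $1$, hence $\G=K_q$. Conversely, $K_q$ has spectrum $\{[q-1]^1,[-1]^{q-1}\}$, which has two eigenvalues, so the code has exactly one nonzero weight. The case $q=2$, where the code is trivial, needs a brief separate remark. Alternatively, $\G(k_r,q)=K_q$ if and only if $k_r=1$, which one could check directly, but the spectral argument is enough.
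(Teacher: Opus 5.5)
Your proof follows the paper's route: you reduce to the minimal representation $\CC(k,q/r)$, then use Theorem~\ref{teo: rel Ckq Gkq} to turn ``$\ell$ nonzero weights'' into ``$\ell+1$ distinct eigenvalues'' of the connected graph $\G(k_r,q)$, and finish with the standard characterizations of $K_q$ and of strongly regular graphs; you also make explicit the surjectivity onto the spectrum and the minimal-polynomial argument, which the paper only cites as known. The one thing to add is that $\G(k_r,q)$ is undirected, which the paper states explicitly: the description of strongly regular graphs as the connected regular graphs with three distinct eigenvalues holds only for undirected graphs (the Paley tournaments $\G(2,q)$, $q\equiv 3 \pmod{4}$, are connected with three eigenvalues), and undirectedness is immediate because $k_r\mid\frac{q-1}{r-1}$ and $r-1$ is even when $q$ is odd, so $k_r\mid\frac{q-1}{2}$ --- and no separate remark is needed for $q=2$, since $K_2$ has eigenvalues $\pm 1$ and the code $\{0,1\}$ has one nonzero weight.
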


\begin{proof} 
Recall that in general the unique connected regular graph with two different eigenvalues is the complete graph. On the other hand,
it is a well-known fact that an undirected connected graph is strongly regular if and only if it has $3$ different eigenvalues. 
Since the graph $\Gamma(k_r,q)$ is undirected and connected by Lemma \ref{lem: ord_n conditions}, 
the remaining assertions follow from Remark \ref{rem: ICCs GPGs}.  	
\end{proof}

\begin{rem}[\textit{All strongly regular GP-graphs}] \label{rem: SW conj}
Assuming the Schmidt and White's conjecture (\cite{SW}), the graph $\G(k,p^m)$ with $k\mid \frac{p^m-1}{p-1}$ is strongly regular if and only if:
	\begin{enumerate}[$(a)$]
		\item $(k,p^m)$ is a semiprimitive pair, or \msk 
		
		\item $(k,p^m)$ belongs to one of the exceptional pairs given by Schimdt and White (see \eqref{sporadic} and the commnets around for more details).
	\end{enumerate}
\end{rem}

\begin{rem}[\textit{Rao-Pinawala's ICCs}] \label{ex: RaoPinawala} 
In \cite{RP}, the authors considered some $2$-weight irreducible cyclic codes
$\CC(k,p^2)$ such that $p$ is an odd prime, $k\mid p-1$ and $k$ is even.
Since $k$ is even and $(p-1,p+1)=2$ for $p$ odd, we have that 
	$$ k_1 = \gcd(k,\tfrac{p^2-1}{p-1})=2 \qquad \text{and} \qquad p^2\equiv 1 \pmod{4}.$$ 
By Proposition \ref{prop: 2-weights}, the SRG corresponding to $\CC(k,p^2)$ as described above is  $\G(2,p^2)$, which in this case coincides with $P(p^2)$, the classic Paley graph with $p^2$ vertices.
This observation is compatible with what Vega pointed out in \cite{V}; these codes correspond to the semiprimitive ones in the characterization of Schmidt-White \cite{SW}. 
\end{rem}

\subsection{Some GP-spectra from $2$-weight exceptional $p$-ary ICCs} \label{sec:7}
Now, we compute the spectrum of the graphs associated with exceptional 2-weight irreducible cyclic codes. 

Schmidt and White gave the following list of pairs $(k,q)$, with $k$ in ascending order, 
\begin{gather} \label{sporadic}
	\begin{aligned}
		(11,3^5), \quad (19, 5^9), \quad (35,3^{12}), \quad (37, 7^9) \quad (43, 11^7), \quad (67, 17^{33}), \\
		(107,3^{53}), \quad (133, 5^{18}), \quad (163,41^{81}), \quad (323, 3^{144}), \quad (499, 5^{249}), 
	\end{aligned}
\end{gather}
such that $\CC(k,q)$ is an exceptional 2-weight irreducible cyclic code, and they conjectured that these are all such codes (\cite{SW}). We refer to them as \textit{exceptional pairs}.
The SW-conjecture is still open up to the author's knowledge. 

For all these pairs $(k,q)$, the number $n=\frac{q-1}k$ is always a primitive divisor of $q-1$.
For instance, for $(11,3^5)$, we see that $n=\frac{3^5-1}{11}=22$ is a primitive divisor of $3^5-1=242$ since $22\nmid 3^a-1$ for 
$a=1,2,3,4$. In fact, if $n$ is not a primitive divisor of $q-1$, then the code $\CC(k,q)$ would be a subfield subcode, which is not the case. Therefore, all these graphs $\G(k,q)$ are connected.

Now, we give the spectra of $\G(k,q)$ for the exceptional pairs given above.

\begin{thm} \label{teo: excep codes}
Let $(k,q)$ be one of the eleven exceptional pairs given in \eqref{sporadic}, where $q = p^m$. 
Then, the generalized Paley graph $\Gamma(k,q)$ is a connected strongly regular graph $srg(q, n, e, d)$ with $n = \frac{q-1}{k}$, whose distinct non-principal eigenvalues $\lambda_1, \lambda_2$ and their corresponding multiplicities $m_1, m_2$ are explicitly given by:
\begin{equation} \label{eq: autovalores exceptionals}
	\begin{aligned}
	\lambda_1 & = \tfrac{\epsilon t p^\theta - 1}{k}, \qquad & m_1 &= \tfrac{1}{2} \big\{ (q-1) - \tfrac{2n + (q-1)(n-d)}{\Delta} \big\}, \\ 
	\lambda_2 & = \tfrac{\epsilon (t-k) p^\theta - 1}{k}, \qquad & m_2 &= \tfrac{1}{2} \big\{ (q-1) + \tfrac{2n + (q-1)(n-d)}{\Delta} \big\}, \\ 
	\end{aligned}
\end{equation}
where the parameters $(p, m, \theta, t, \epsilon)$ for each exceptional pair are as in the tables above 
{\small 
\begin{equation} \label{tablita}
	\renewcommand{\arraystretch}{1}
	\begin{tabular}{|c|c|c|c|c|c|}
		\hline 
		$k$ & $p$ & $m$ & $\theta$ & $t$ & $\epsilon$ \\ \hline
		$11$ & $3$  & $5$  & $2$  & $5$  & $1$ \\
		$19$ & $5$  & $9$  & $4$  & $9$  & $1$ \\
		$35$ & $3$  & $12$ & $5$  & $17$ & $1$ \\
		$37$ & $7$  & $9$  & $4$  & $9$  & $1$ \\
		$43$ & $11$ & $7$  & $3$  & $21$ & $1$ \\
		$67$ & $17$ & $33$ & $16$ & $33$ & $1$ \\
		\hline
	\end{tabular}
	\qquad \quad 
	\renewcommand{\arraystretch}{1.15}
	\begin{tabular}{|r|r|r|r|r|r|}
		\hline 
		$k$ & $p$ & $m$ & $\theta$ & $t$ & $\epsilon$ \\ \hline
		$107$& $3$ & $53$& $25$ & $53$ & $1$ \\
		$133$& $5$ & $18$&  $8$ & $33$ & $-1$\\
		$163$& $41$& $81$& $40$ & $81$ & $1$ \\
		$323$& $3$ &$144$& $70$ & $161$& $1$ \\
		$499$& $5$ &$249$& $123$& $249$& $1$ \\
		\hline
	\end{tabular}
\end{equation}}
and where $\Delta = \sqrt{(e-d)^2 + 4(n-d)}$ with $d$ and $e$ computed from $\lambda_1$ and $\lambda_2$ via
	\begin{equation} \label{ed sporadic}
		d = n - \tfrac 14 \big\{(\lambda_1 - \lambda_2)^2-(\lambda_1+\lambda_2)^2 \big\}  \qquad \text{and} \qquad e = d + \lambda_1 + \lambda_2.
	\end{equation}
\end{thm}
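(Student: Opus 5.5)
Our plan is to feed the known weights of the exceptional codes into the spectral relation of Corollary~\ref{coro: p-ary}, and then use the standard strongly regular graph identities to get $d$, $e$ and the multiplicities.

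\emph{Step 1: connectedness and integrality.} For each exceptional pair $(k,q)$ in \eqref{sporadic} we check two facts. First, $n=\frac{q-1}{k}$ is a primitive divisor of $q-1$, as noted above, so $ord_n(p)=m$. Second, $k\mid \frac{q-1}{p-1}$; this is a finite check using $p$-adic orders of $p-1$. By Theorem~\ref{teo: SpecGkq}, $\G(k,q)$ is therefore connected and integral, and Corollary~\ref{coro: p-ary} applies with $r=p$ and $k_p=k$. Undirectedness holds because $k\mid\frac{q-1}{2}$ when $q$ is odd; this follows since $\frac{q-1}{p-1}$ and $p-1$ is even, or $p=3$ cases checked directly.

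\emph{Step 2: transfer the weights.} Schmidt and White \cite{SW} show that $\CC(k,q)$ is a two-weight code. Its two nonzero weights come from Gauss sums in the index-$2$ (quadratic residue) setting, and can be written as
\[
w_{1,2}=\frac{(p-1)}{pk}\bigl(q-\epsilon\, t\,p^{\theta}\bigr),\qquad \frac{(p-1)}{pk}\bigl(q-\epsilon (t-k)p^{\theta}\bigr),
\]
with $(\theta,t,\epsilon)$ taken from their tables; these are the data recorded in \eqref{tablita}. Equivalently, the Gaussian periods take exactly the two values $\frac{\epsilon t p^\theta-1}{k}$ and $\frac{\epsilon(t-k)p^\theta-1}{k}$. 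Substituting into \eqref{Pesoaut2} (or directly into Theorem~\ref{teo: SpecGkq}) gives $\lambda_1,\lambda_2$ as stated.

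\emph{Step 3: the SRG parameters.} Proposition~\ref{prop: 2-weights}$(b)$ shows that $\G(k,q)$ is strongly regular. For an $srg(q,n,e,d)$, the nonprincipal eigenvalues are the roots of $x^2-(e-d)x-(n-d)=0$. Hence $\lambda_1+\lambda_2=e-d$ and $\lambda_1\lambda_2=d-n$, which is exactly \eqref{ed sporadic}, since $\tfrac14\{(\lambda_1-\lambda_2)^2-(\lambda_1+\lambda_2)^2\}=-\lambda_1\lambda_2$. It also follows that $\Delta=\lambda_1-\lambda_2$. The multiplicities come from the two trace conditions
\[
m_1+m_2=q-1,\qquad n+m_1\lambda_1+m_2\lambda_2=0.
\]
These give the standard formula $m_{1,2}=\tfrac12\bigl\{(q-1)\mp\frac{2n+(q-1)(e-d)}{\Delta}\bigr\}$.

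\emph{Main obstacle.} The displayed formula for the multiplicities has $(n-d)$ where the standard derivation gives $(e-d)$. So the main care point is to recheck this algebra, and possibly correct the expression to $e-d=\lambda_1+\lambda_2$. A second, milder obstacle is to make sure the sign and normalization conventions of \cite{SW} for $(\theta,t,\epsilon)$ match the Gaussian-period normalization used here. To settle both, we would verify one case numerically, for instance $(11,3^5)$. There $\lambda_1=\frac{45-1}{11}=4$ and $\lambda_2=\frac{-54-1}{11}=-5$, which gives a $srg(243,22,1,2)$, the known Berlekamp--van Lint--Seidel graph. The multiplicities then come out as $132$ and $110$.
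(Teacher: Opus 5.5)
Your route is the same as the paper's. You take the Schmidt--White weights, which coincide with the paper's $w_1$ and $w_2=w_1+\epsilon(p-1)p^{\theta-1}$. You convert them with $\lambda=n-\frac{p}{p-1}w$, recover $e$ and $d$ by Vieta, and get the multiplicities from $m_1+m_2=q-1$ and $n+m_1\lambda_1+m_2\lambda_2=0$.

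You are right about $(n-d)$. The paper's proof quotes the ``classical relations'' with $(n-d)$, but that is a typo. For $(11,3^5)$ the quotient $\frac{2n+(q-1)(n-d)}{\Delta}$ would be $\frac{4884}{9}\notin\Z$. The multiplicities the paper actually tabulates are the ones produced by $2n+(q-1)(e-d)$.

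A minor side remark: not all eleven pairs are index-$2$ cases. For example, $\langle 7\rangle$ has index $4$ in $(\Z/37\Z)^*$. This is harmless, since you only use the cited weights.

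\textbf{The step that still fails.} The sentence ``It also follows that $\Delta=\lambda_1-\lambda_2$'' is false in one of the eleven cases. The paper's proof makes the same slip when it matches $\lambda_1=\lambda^+$.

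\begin{itemize}
\item From the formulas, $\lambda_1-\lambda_2=\epsilon p^{\theta}$.
\item On the other hand, $\Delta=\sqrt{(e-d)^2+4(n-d)}=|\lambda_1-\lambda_2|=p^{\theta}$.
\item So for $(133,5^{18})$, where $\epsilon=-1$, you have $\lambda_1=-96922<\lambda_2=293703$ and $\Delta=-(\lambda_1-\lambda_2)$.
\end{itemize}

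What your trace computation actually gives, valid for either ordering, is
\[
m_1=\tfrac12\Bigl\{(q-1)-\tfrac{2n+(q-1)(e-d)}{\lambda_1-\lambda_2}\Bigr\},\qquad m_2=q-1-m_1.
\]
Hence, with the nonnegative $\Delta$ of the statement, the correct formula is
\[
m_{1,2}=\tfrac12\bigl\{(q-1)\mp\epsilon\,\tfrac{2n+(q-1)(e-d)}{\Delta}\bigr\}.
\]

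Concretely, for $k=133$:
\begin{itemize}
\item $q-1=133n$, $e-d=196781$, $\Delta=5^8$, and $2n+(q-1)(e-d)=67n\Delta$.
\item The true multiplicities are $m_1=100n$ and $m_2=33n$. Check: $1+100\lambda_1+33\lambda_2=0$. These are exactly the values in the paper's table.
\item The unsigned formula would instead assign $33n$ to $\lambda_1$.
\end{itemize}

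Your sanity check at $(11,3^5)$ cannot detect this, because $\epsilon=1$ there. To fix it, either define $\Delta:=\lambda_1-\lambda_2$ as a signed quantity, or carry the factor $\epsilon$ in the multiplicity formula.
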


\begin{proof}
By hypothesis, $\CC(k,q)$ is a 2-weight irreducible cyclic code whose exact nonzero weights $w_1$ and $w_2$ are provided by Corollary 3.2 and Table 1 in \cite{SW} in terms of the structural parameters of the tables in \eqref{tablita}. 
	
Since $\G(k,q)$ is connected and its associated code $\CC(k,q)$ has exactly two weights, Theorem \ref{teo: rel Ckq Gkq} ensures that $\Gamma(k,q)$ is a strongly regular graph $srg(q,n,e,d)$ (see also \cite{SW}). The relation between the eigenvalues of $\G(k,q)$ and the weights of the code, established in Theorem \ref{teo: rel Ckq Gkq}, gives $\lambda_1 = n - \frac{p}{p-1} w_1$ and $\lambda_2 = n - \frac{p}{p-1} w_2$. 
	
Substituting the explicit formulas for 
	$$ w_1 = \tfrac{1}{k} (p-1)p^{\theta -1}(p^{m-\theta}-\epsilon t) \qquad \text{and} \qquad 
	w_2 = w_1 + \epsilon (p-1)p^{\theta-1}$$ 
into these relations, we obtain:
	\begin{align*}
		\lambda_1 &= \tfrac{p^m - 1}{k} - \tfrac{p}{p-1} \{ \tfrac{1}{k} (p-1)p^{\theta -1}(p^{m-\theta}-\epsilon t) \} = \tfrac{p^m - 1 - (p^m - \epsilon t p^\theta)}{k} = \tfrac{\epsilon t p^\theta - 1}{k}, \\[1mm]
		\lambda_2 &= \lambda_1 - \tfrac{p}{p-1} \{ \epsilon (p-1)p^{\theta-1} \} = \tfrac{\epsilon t p^\theta - 1}{k} - \epsilon p^\theta = \tfrac{\epsilon (t-k) p^\theta - 1}{k}.
	\end{align*}
	
For any connected strongly regular graph $srg(q,n,e,d)$, the two non-principal eigenvalues $\lambda^\pm$ and their multiplicities $m(\lambda^\pm)$ are universally governed by the parameters $q, n, e, d$ through the classical relations
	\begin{equation} \label{multip sporadic}
		\lambda^{\pm} = \tfrac{(e-d) \pm \Delta}{2} \qquad \text{and} \qquad 
		m(\lambda^\pm) = \tfrac{1}{2} \big\{ (q-1) \mp \tfrac{2n + (q-1)(n-d)}{\Delta} \big\},
	\end{equation}
where $\Delta = \sqrt{(e-d)^2 + 4(n-d)}$. 
Matching $\lambda_1 = \lambda^+$ and $\lambda_2 = \lambda^-$, their sum and difference satisfy $\lambda_1+\lambda_2 = e-d$ and $\lambda_1 - \lambda_2 = \Delta$. Inverting these equations uniquely determines the intersection parameters $e$ and $d$ in terms of $\lambda_1$ and $\lambda_2$ as stated in \eqref{ed sporadic}.
	
Finally, by Theorem \ref{teo: rel Ckq Gkq}, the frequencies $A_{w_i}$ of the weights $w_i$ coincide with the multiplicities of the eigenvalues $m_i$. 
\end{proof}

\begin{rem}
Some authors called \textit{cyclotomic strongly regular graphs} to the strongly regular graphs of the form $Cay(\ff_{q},D)$, 
where $D$ is a union of cosets of a multiplicative subgroup of $\ff_{q}^*$ (see \cite{Tx}). 
In \cite{Tx}, some of the above spectra in the sporadic case are found 
by using index-$2$ Gauss sums.
\end{rem}

Next, using the previous result, we explicitly compute the spectrum of 8 out of 11 exceptional pairs.
\begin{exam}
In the tables below we present the spectrum of $\G(k,q)$ for the first eight exceptional pairs.
\renewcommand{\arraystretch}{1.35}
\begin{table}[H] 
	\begin{tabular}{|c|c|c|}
		\hline
		$(k,q)$ & parameters $srg(q,n,e,d)$ & $Spec(\G(k,q)) \smallsetminus \{[n]^1\}$ \\
		\hline
		$(11,3^5)$ 		& $srg(243, 22, 1, 2)$ & $\{[4]^{132}, [-5]^{110}\}$  \\ \hline
		$(19,5^9)$ 		& $srg(1{.}953{.}125, 102{.}796, 5{.}379, 5{.}412)$ & $\{[296]^{1{.}027{.}960}, [-329]^{925{.}164}\}$ \\ \hline
		$(35,3^{12})$ & $srg(531{.}441, 15{.}184, 427, 434)$ & $\{[118]^{273{.}312}, [-125]^{258{.}128}\}$ \\ \hline
		$(37,7^9)$ 		& $srg(40{.}353{.}607, 1{.}090{.}638, 282{.}771, 29{.}510)$ & $\{[584]^{30{.}537{.}864}, [-1817]^{9{.}815{.}742}\}$ \\ \hline
		$(43,11^7)$ 	& $srg(19{.}487{.}171, 453{.}190, 10{.}509, 10{.}540)$ & $\{[650]^{9{.}970{.}180}, [-681]^{9{.}516{.}990}\}$ \\ \hline
	\end{tabular}
  \caption{Spectra of $\G(k,q)$ for the first 5 exceptional pairs.} \label{tab: 1}
\end{table}

The frequencies $A_{w_i}$ of the weights $w_i$ of $\CC(k,q)$ in Table \ref{tab: 2} below are the multiplicities $m_i$ of the corresponding eigenvalues $\lambda_i$ of $\G(k,q)$ in Table \ref{tab: 1}.

\renewcommand{\arraystretch}{1.2}
\begin{table}[H] 	
	\begin{tabular}{|c|c|c|c|c|c|}  
		\hline
		weights & $\CC(11,3^5)$ & $\CC(19,5^9)$ & $\CC(35,3^{12})$ & $\CC(37,7^9)$ & $\CC(43,11^7)$ \\ \hline
		$w_1$ 	& 12 						& 82{.}000 			& 10{.}044      	 & 934{.}332     & 411{.}400 \\
		$w_2$ 	& 18 						& 82{.}500 			& 10{.}206 				 & 936{.}390     & 412{.}610 \\ \hline
	\end{tabular}
\caption{Spectra of $\CC(k,q)$ for the first 5 exceptional pairs.}
\label{tab: 2}
\end{table}

The pairs $(67, 17^{33})$, $(107, 3^{53})$ and $(133, 5^{18})$ have intermediate complexity and are given separately in Tables 3 
and 4.

\renewcommand{\arraystretch}{1.25}
\begin{table}[H] \label{table5}
	\begin{tabular}{l}
		Pair $(67, 17^{33})$ \\
		\hline
		$q= 40{.}254{.}497{.}110{.}927{.}943{.}179{.}349{.}807{.}054{.}456{.}171{.}205{.}137$ \\
		$n= 600{.}813{.}389{.}715{.}342{.}435{.}512{.}683{.}687{.}379{.}942{.}853{.}808$ \\ 
		$e = 8{.}967{.}364{.}025{.}602{.}125{.}902{.}458{.}937{.}044{.}032{.}559{.}119$ \\ 
		$d = 8{.}967{.}364{.}025{.}602{.}125{.}903{.}185{.}223{.}489{.}938{.}034{.}768$ \\ \hline
		$\lambda_1 =  23{.}967{.}452{.}714{.}880{.}696{.}416$ \\
		$m_1 = 20{.}427{.}655{.}250{.}321{.}642{.}807{.}431{.}245{.}370{.}918{.}057{.}029{.}472$  \\ \hline
		$\lambda_2 = -24{.}693{.}739{.}160{.}786{.}172{.}065$ \\
		$m_2= 19{.}826{.}841{.}860{.}606{.}300{.}371{.}918{.}561{.}683{.}538{.}114{.}175{.}664$ \\ \hline
		$w_1 = 565{.}471{.}425{.}614{.}439{.}939{.}283{.}497{.}632{.}625{.}940{.}854{.}016$ \\
		$w_2 = 565{.}471{.}425{.}614{.}439{.}939{.}329{.}296{.}401{.}450{.}097{.}906{.}704$ \\ \hline
	\end{tabular}
		\caption{Spectra of $\G(k,q)$ and $\CC(k,q)$ for the 6th exceptional pair.} \
	\end{table}

\begin{table}[H] \label{table6}
	\begin{tabular}{l}
		Pair $(107, 3^{53})$ \\
		\hline
		$q = 19{.}383{.}245{.}667{.}680{.}019{.}896{.}796{.}723$ \\ 
		$n = 181{.}151{.}828{.}669{.}906{.}728{.}007{.}446$ \\ 
		$e = 360{.}610{.}649{.}595{.}226{.}895{.}872{.}817$ \\ 
		$d = 360{.}610{.}649{.}595{.}234{.}814{.}457{.}952$ \\ \hline
		
		$\lambda_1 =  419{.}685{.}012{.}154$ \\
		$m_1 = 9{.}782{.}198{.}748{.}174{.}963{.}312{.}402{.}084$  \\ \hline
		$\lambda_2 = -427{.}603{.}597{.}289$ \\
		$m_2= 9{.}601{.}046{.}919{.}505{.}056{.}584{.}394{.}638$ \\ \hline
		$w_1 = 120{.}767{.}885{.}779{.}658{.}028{.}663{.}528$ \\
		$w_2 = 120{.}767{.}885{.}780{.}222{.}887{.}736{.}490$ \\ \hline
	\end{tabular} \qquad \qquad 
	\begin{tabular}{l}
		Pair $(133, 5^{18})$ \\
		\hline
		$q = 3{.}814{.}697{.}265{.}625$ \\ 
		$n = 28{.}681{.}934{.}328$ \\ 
		$e = 215{.}848{.}943$ \\ 
		$d = 215{.}652{.}162$ \\ \hline
		$\lambda_1 = -96{.}922$ \\
		$m_1 = 2{.}868{.}193{.}432{.}800$  \\ \hline
		$\lambda_2 = 293{.}703$ \\
		$m_2= 946{.}503{.}832{.}824$ \\ \hline
		$w_1 = 22{.}945{.}625{.}000$ \\
		$w_2 = 22{.}945{.}312{.}500$ \\ \hline
	\end{tabular}
	\caption{Spectra of $\G(k,q)$ and $\CC(k,q)$ for the 7th and 8th exceptional pairs.}
\end{table}
For the remaining 3 pairs the computations are too big and we omit them.
\hfill $\diamond$ 
\end{exam}

\begin{rem} \label{rem: 2-weights dist}
In \cite{SW}, Schmidt and White gave an expression for the weights of the (conjecturally) all 2-weight irreducible cyclic codes. 
From Theorems \ref{teo: rel Ckq Gkq} and \ref{teo: excep codes} and Theorem~5.4 of \cite{PV8}, we provide the frequencies of these weights in the semiprimitive and exceptional cases, thus completing the computation of the weight distributions of the exact 2-weight irreducible cyclic codes. 
\end{rem}

We finish the section with comments about $t$-weight cyclic codes with $t >2$.
\begin{rem}[\textit{Some few-weight ICCs}]
To obtain some few weight irreducible cyclic codes (with more than two weights) from GP-graphs there are two possibilities at hand. 

\noindent ($a$) Use the graphs $\G(k,q)$ for $k=3,4,5$ which were studied and their spectrum computed in \cite{PV8}.
More precisely, for $p\equiv 1 \pmod{k}$ and $m=kt$, the graphs $\G(k,q)$ with $q=p^{m}$ are integral connected GP-graphs with $k+1$ different eigenvalues for $k=3,4,5$ (see Theorems 3.1--3.2 and Proposition 3.4 from \cite{PV8}). 
In this case, via these GP-graphs and Theorem \ref{teo: r-ary corresp}, we can obtain ICCs with $k$-weights for $k=3,4,5$.

\noindent ($b$) Use the Hamming GP-graphs (see next section). 
The Hamming graph $H(b,p^a)$ is integral with $b+1$ eigenvalues. 
Hence, the Hamming GP-graphs $\G(\frac{p^{ab}-1}{b(p^a-1)},p^{ab})=H(b,p^a)$
for $b=2,3,4$, that is 
	$$ \G(\tfrac{p^a+1}2,p^{2a}), \qquad \G(\tfrac{p^{2a}+p^a+1}3,p^{3a}) \qquad \text{and} \qquad \G(\tfrac{(p^a+1)(p^{2a}+1)}4,p^{4a}) $$
give rise, under the spectral correspondence, to ICCs with $3,4,5$ weights, respectively (see \eqref{eq: Spec Hamming} in the next section, we leave the details).
\end{rem}

\section{Weight enumerators of Hamming ICCs} \label{sec: Hamming}
In this final section, by using the spectral relation between ICCs and GP-graphs, we obtain the weight enumerator of all the irreducible cyclic codes related to Hamming GP-graphs.

\subsubsection*{Hamming GP-graphs and Hamming ICCs}
A \textit{Hamming graph} $H(b,q)$ is a graph with vertex set $V=K^b$ where $K$ is any set of size $q$
(typically $\ff_q$ in applications), and where two $b$-tuples form and edge if and only if they differ in exactly one coordinate. Clearly, $H(b,q)$ is a connected graph.

Now, those GP-graphs which are Hamming graphs were characterized in \cite{LP}. 
Indeed,  a connected $\G(k,p^m)$ is Hamming if and only if 
	$$ k= \tfrac{p^{m}-1}n \qquad \text{with} \qquad n = b(p^{a}-1) $$ 
for some $b\mid m$ with $b>1$ and $m=ab$. In other words, the graph $\G(k,q)$ is Hamming if and only if 
	$$ k= \tfrac{p^{ab}-1}{b(p^a-1)}$$
with $b\in \N$ and $b(p^a-1)$ is a primitive divisor of $p^m-1$ (see \cite{PV7} for more details on this condition).
Moreover, in this case
\begin{equation}\label{eq: Hamming}
	\G(\tfrac{p^{ab}-1}{b(p^{a}-1)},p^{ab})\cong H(b,p^{a}).
\end{equation}

The spectra of $H(b,p^{a})$ is well-known, since 
any Hamming graph is the Cartesian product of the same complete graph of a given size, for instance 
	$$ H(b,p^{a}) \cong \square^b K_{p^{a}}.$$ 
Thus, Hamming graphs have integral spectrum given by  
\begin{equation}\label{eq: Spec Hamming}
	Spec(H(b,p^{a})) = \big\{ [ \ell\cdot p^{a}-b ]^{\binom{b}{\ell}(p^{a}-1)^{b-\ell}} : 0 \le \ell \le b \big\}.
\end{equation}	

We now define Hamming irreducible cyclic codes as those associated with the Hamming GP-graphs, which are integral and connected.
\begin{defi}
A \textit{Hamming irreducible cyclic code} is an ICC $\CC(k,q/r)$ such that the corresponding  GP-graph $\G(k_r,q)$ is Hamming.
Namely, $q= p^m = r^{h}$ where $h=ord_{n}(r)$ with $n=\frac{q-1}{k}$,
$k_r=\gcd(k,\frac{q-1}{r-1})$, $n_r=\frac{q-1}{k_r}$, with $n_r$ satisfying
	$$
		n_r=b(p^{a}-1)\qquad \text{and} \qquad m=ab
	$$	
for some positive integers $a,b$.
\end{defi}

\subsection*{Weight enumerators of Hamming ICCs}
By using the spectral relation between GP-graphs and ICCs we now give the weight enumerators of Hamming cyclic codes, showing that they have a nice reductive property.

\begin{thm}\label{teo Hamming}
Let $p$ be a prime and let $h,m,s,q,r$ be positive integers with $m=sh$, $r=p^{s}$ and $q=r^{h}$. 
Let $n$ be a positive integer coprime with $p$ such that $n\mid p^m-1$.
Let $\CC_\mathcal{H}=\CC(k,q/r)$ be the $r$-ary irreducible cyclic code as in \eqref{eq: Ckqs} where 
$k=\frac{p^{m}-1}{n}$. 
If $\CC_\mathcal{H}$ is a Hamming irreducible cyclic code for some positive integers $a,b$ with $n_r=b(p^{a}-1)$,
then the weight enumerator of $\CC$ satisfies
\begin{equation} \label{eq: weights decomp hamming}
	W_{\CC_\mathcal{H}}(x) = \Big( 1+ (p^a-1) \cdot  x^{\frac{n(p^{s}-1)p^{a-s}}{b(p^a-1)}} \Big)^{b} = 
	\sum_{t=0}^b \tbinom{b}{t}(p^a-1)^t \cdot  x^{\frac{tn(p^{s}-1)p^{a-s}}{b(p^a-1)}}.
\end{equation} 
In particular, any non-trivial weight has frequency divisible by $p^a-1$ and the minimum and maximum weight of $\CC_\mathcal{H}$ are 
	$w_{min}=\frac{n(p^{s}-1)p^{a-s}}{b(p^a-1)}$ and $w_{max}=\frac{n(p^{s}-1)p^{a-s}}{(p^a-1)}$, respectively.
\end{thm}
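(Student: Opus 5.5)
The plan is to read off the weight distribution of $\CC_\mathcal{H}$ directly from the spectrum of the Hamming graph, using the spectral relation of Theorem~\ref{teo: rel Ckq Gkq}. By hypothesis $q=r^h$ with $h=ord_n(r)$, so Theorem~\ref{teo: rel Ckq Gkq} applies. By Lemma~\ref{lem: Gkrq conexo}, $\G(k_r,q)$ is connected and integral. Since $\CC_\mathcal{H}$ is a Hamming ICC, we have $n_r=\frac{q-1}{k_r}=b(p^a-1)$ with $m=ab$, so $k_r=\frac{p^{ab}-1}{b(p^a-1)}$. By \eqref{eq: Hamming}, $\G(k_r,q)\cong H(b,p^a)$. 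Isomorphic graphs have the same spectrum, so \eqref{eq: Spec Hamming} gives the eigenvalues
$$\lambda_\ell=\ell p^a-b \quad\text{with multiplicity}\quad \tbinom{b}{\ell}(p^a-1)^{b-\ell},\qquad 0\le\ell\le b.$$

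Next, I convert each eigenvalue to a weight via $w=\frac{r-1}{r}\big(n-\frac{k_r}{k}\lambda\big)$ from Theorem~\ref{teo: rel Ckq Gkq}. The key simplification is
$$\frac{k_r}{k}=\frac{n}{n_r}=\frac{n}{b(p^a-1)}.$$
Substituting,
$$w_\ell=\frac{p^s-1}{p^s}\,n\Big(1-\frac{\ell p^a-b}{b(p^a-1)}\Big)=\frac{p^s-1}{p^s}\,n\,\frac{p^a(b-\ell)}{b(p^a-1)}=\frac{(b-\ell)\,n(p^s-1)p^{a-s}}{b(p^a-1)}.$$
Reindexing with $t=b-\ell$ turns the multiplicity into $\binom{b}{t}(p^a-1)^t$. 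Theorem~\ref{teo: rel Ckq Gkq} says multiplicities equal frequencies, so summing over $t$ gives the stated sum. The binomial theorem then gives the closed product form.

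The remaining claims follow quickly. The case $t=0$ is $\ell=b$, the principal eigenvalue $n_r=bp^a-b$, which corresponds to the zero word with frequency $1$; this is consistent with connectedness. Every $t\ge1$ gives a frequency divisible by $p^a-1$. The minimum weight occurs at $t=1$ and the maximum at $t=b$.

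The main point needing care is that Theorem~\ref{teo: rel Ckq Gkq} is stated codeword-by-codeword through the Gaussian periods of $\G(k_r,q)$, whereas the Hamming spectrum comes from an abstract graph isomorphism. This is harmless, since only the multiset of eigenvalues is used, and the map $\lambda\mapsto w$ is a strictly decreasing affine map. Hence distinct eigenvalues go to distinct weights, and multiplicities transfer exactly to frequencies.
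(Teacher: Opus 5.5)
Your proposal is correct and follows the paper's proof essentially step by step. Like the paper, you identify $\G(k_r,q)\cong H(b,p^a)$, read off the Hamming spectrum, convert eigenvalues to weights and multiplicities to frequencies via Theorem~\ref{teo: rel Ckq Gkq}, and then reindex with $t=b-\ell$ and apply the binomial theorem. Your simplification $k_r/k=n/n_r$ and your remark that $\lambda\mapsto w$ is an injective affine map only make explicit what the paper leaves implicit.
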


\begin{proof} 
Since $\CC$ is a Hamming irreducible cyclic code, we have that $ord_{n}(r)=h$. Hence, $ord_{n_r}(p)=m$ holds by Lemma \ref{lem: ord_n conditions}, i.e.\@ we have that $n_r\dagger p^{m}-1$. Since $n_r=b (p^a-1)$ and $m=ab$, by \eqref{eq: Hamming} we have that
	$$	
	\G(k_r,q)= \G(\tfrac{p^{ab}-1}{b(p^{a}-1)},p^{ab})\cong H(b,p^{a}).
	$$
By \eqref{eq: Spec Hamming}, its spectrum is 
	$Spec (H(b,p^{a})) = \big\{ [ \ell\cdot p^{a}-b ]^{\binom{b}{\ell}(p^{a}-1)^{b-\ell}}\}_{0 \le \ell \le b}$, and 
thus, by Theorem~\ref{teo: rel Ckq Gkq}, the weights of $\CC$ are given by
	$$ \tfrac{(b-\ell)n(p^{s}-1)p^{a-s}}{b(p^a-1)} $$ 
with frequencies $\tbinom{b}{b-\ell}(p^a-1)^{b-\ell}$, for $0 \le \ell \le b$.

Therefore, the weight enumerator of $\CC$ satisfies
	\begin{align*}
		W_{\CC}(x) & =\sum_{\ell=0}^b \tbinom{b}{b-\ell}(p^a-1)^{b-\ell} \,  x^{\frac{(b-\ell)n(p^{s}-1)p^{a-s}}{b(p^a-1)}}\\ 
		& =\sum_{t=0}^b \tbinom{b}{t}(p^a-1)^t \cdot  x^{\frac{tn(p^{s}-1)p^{a-s}}{b(p^a-1)}} 
		= \Big( 1+ (p^a-1) x^{\frac{n(p^{s}-1)p^{a-s}}{b(p^a-1)}} \Big)^b.
	\end{align*}

The last assertions are clear from the expressions in \eqref{eq: weights decomp hamming}.
\end{proof}

As a direct consequence, we obtain the following particular case of interest.

\begin{coro} \label{coro Hamming a=s}
Let $p$ be a prime and $m=sh$ with $s, h \in \N$. 
Let $n \in \N$ be coprime with $p$ such that
$n\mid p^m-1$.
Let $\CC_\mathcal{H}=\CC(k,p^{m}/p^s)$ be the $p^s$-ary Hamming irreducible cyclic code as in \eqref{eq: Ckqs} where 
$k=\frac{p^{m}-1}{n}$. 
Put 
$k_r=\gcd(\frac{p^m-1}{p^s-1},k)$ and let $n_r=\frac{p^m-1}{k_r}$.
If $n_r=h(p^s-1)$, then 
	\begin{equation}\label{eq: weights decomp hamming a=s}
		W_{\CC}(x)=\big(1+ (p^s-1) \cdot x^{\frac{n}{h}}\big)^{h}  = \sum_{t=0}^h \tbinom{h}{t}(p^s-1)^t \cdot  x^{\frac{tn}{h}}.
	\end{equation} 
In particular, any non-trivial weight has frequency divisible by $p^s-1$ and the minimum and maximum weight of $\CC_\mathcal{H}$ are $w_{min}=\frac{n}{h}$ and $w_{max}=n$, respectively.
\end{coro}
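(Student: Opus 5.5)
This is the special case $a=s$, $b=h$ of Theorem~\ref{teo Hamming}, so the plan is to check its hypotheses and then substitute. Put $r=p^s$ and $q=p^m=r^h$. The hypothesis $n_r=h(p^s-1)$ is exactly the Hamming condition $n_r=b(p^a-1)$ with $a=s$, $b=h$, and then $m=sh=ab$ as that theorem requires. We also need $\CC_\mathcal{H}$ to be a genuine Hamming ICC in the sense of the definition. This is built into the statement, which calls $\CC_\mathcal{H}$ a Hamming irreducible cyclic code. In particular $ord_n(r)=h$, and by Lemma~\ref{lem: ord_n conditions} this gives $ord_{n_r}(p)=m$, so $\G(k_r,q)$ is connected and, by \eqref{eq: Hamming}, isomorphic to $H(h,p^s)$.

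Next we substitute $a=s$, $b=h$ into \eqref{eq: weights decomp hamming}. The factor $p^{a-s}$ becomes $1$, and $(p^s-1)$ cancels against $(p^a-1)=(p^s-1)$. The exponent therefore simplifies to
$$\tfrac{n(p^s-1)p^{a-s}}{b(p^a-1)}=\tfrac{n}{h}.$$
This gives $W_{\CC}(x)=\big(1+(p^s-1)x^{n/h}\big)^h$, and the binomial theorem produces the stated sum. The $t$-th nonzero weight is $tn/h$ with frequency $\binom{h}{t}(p^s-1)^t$. For $t\ge1$ this frequency is divisible by $p^s-1$. The minimum weight comes from $t=1$ and equals $n/h$; the maximum comes from $t=h$ and equals $n$.

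The only delicate point is integrality: $n/h$ should be an integer, so that the expression really is a polynomial. This follows from the weight formula of Theorem~\ref{teo: rel Ckq Gkq}, because weights are integers. One can also see it directly. We have $k_r\mid k$ and $k\mid q-1$, so $n\mid n_r=h(p^s-1)$; the relation $w=\tfrac{r-1}{r}(n-\tfrac{k_r}{k}\lambda)$ applied to the eigenvalues $\ell p^s-h$ then yields integer weights.

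The main point to double-check is that the general formula of Theorem~\ref{teo Hamming} is stated with the correct normalisation. One can verify it independently by plugging $\lambda=\ell p^s-h$ into Theorem~\ref{teo: rel Ckq Gkq}:
$$w=\tfrac{p^s-1}{p^s}\Big(n-\tfrac{n}{h(p^s-1)}(\ell p^s-h)\Big)=\tfrac{(h-\ell)n}{h},$$
which matches the claimed weights.
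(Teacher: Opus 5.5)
Your proposal is correct and follows the paper's proof: the paper also obtains the corollary by specializing Theorem~\ref{teo Hamming} to $a=s$, $b=h$, where $p^{a-s}=1$ and the exponent collapses to $n/h$, and your independent check via Theorem~\ref{teo: rel Ckq Gkq} with $\lambda=\ell p^s-h$ confirms the normalization. Your ``direct'' integrality remark does not by itself give $h\mid n$, since that needs $k=k_rt$ with $t\mid p^s-1$ and $\gcd(t,h)=1$, which yields $n/h=(p^s-1)/t$; this is harmless, because $n/h$ is an attained weight and therefore an integer.
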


\begin{proof}
It is a direct consequence of Theorem \ref{teo Hamming} by taking $a=s$. 
Since in the notation of the above theorem, we obtain that $u_s=1$ and $b=h$.
\end{proof}


\begin{thebibliography}{XXX}
\bibitem{BMc}{\sc L.D.\@ Baumert, R.J.\@ McEliece}. 
\textit{Weights of irreducible cyclic codes}. 
Information and Control \textbf{20}, (1972) 158--175. 

\bibitem{BMy}{\sc L.D.\@ Baumert, J.\@ Mykkeltveit}.
\textit{Weight distributions of some irreducible cyclic codes}. 
DSN Progr.\@ Rep.\@ \textbf{16}, (1973) 128--131.

\bibitem{CK}
{\sc R.\@ Calderbank, W.\@ Kantor}.
\textit{The geometry of two-weight codes}. 
Bull.\@ London Math.\@ Soc.\@ \textbf{18}, (1986) 97--122.

\bibitem{Co}
\textsc{S.D.\@ Cohen}.
\textit{Clique numbers of Paley graphs}.
Quaest.\@ Math.\@ \textbf{11:2}, (1988) 225-231 .

\bibitem{CDS} 
\textsc {D.\@ Cvetkovic, M.\@ Doobs and H.\@ Sachs}. 
\textit{Spectra of graphs}.
Pure and Applied Mathematics, Academic Press, 1980.

\bibitem{D}{\sc P.\@ Delsarte}.
\textit{Weights of linear codes and strongly regular normed spaces}. 
Discrete Math.\@ \textbf{3}, (1972) 47--64.


\bibitem{Di1}{\sc C.\@ Ding}. 
\textit{The weight distribution of some irreducible cyclic codes}. 
IEEE Trans.\@ Inform.\@ Theory \textbf{55:3} (2009), 955--960.

\bibitem{Di2}{\sc C.\@ Ding}. 
\textit{A class of three-weight and four-weight codes}. 
C.\@ Xing, et al. (Eds.), Proc.\@ of the Second International Workshop on Coding Theory and
Cryptography. Lecture Notes in Computer Science, vol.\@ \textbf{5557}, Springer Verlag, (2009) 34--42.

\bibitem{DY}{\sc C.\@ Ding, J.\@ Yang}. 
\textit{Hamming weights in irreducible cyclic codes}.
Discrete Math.\@ \textbf{313:4}, (2013) 434--446.

\bibitem{Tx} \textsc{T.\@ Feng, Q.\@ Xiang}.
\textit{Strongly regular graphs from unions of cyclotomic classes}. 
J.\@ Combin.\@ Theory Ser.\@ B \textbf{102}, (2012) 982--995.

\bibitem{GK}{\sc D.\@ Ghinelli, J.D.\@ Key}. 
\textit{Codes from incidence matrices and line graphs of Paley graphs}.
Adv.\@ Math.\@ Comm.\@ \textbf{5}, (2011) 93--108.

\bibitem{HPvR}{\sc  W.\@ Haemers, R.\@ Peeters, J.\@ van Rijckevorsel}. 
\textit{Binary codes of strongly regular graphs}.
Design Code. Cryptogr.\@ \textbf{17}, (1999) 187--209.


\bibitem{KL}{\sc J.D.\@ Key, J.\@ Limbupasiriporn}. 
\textit{Partial permutation decoding for codes from Paley graphs}.
Cong.\@ Numer.\@ \textbf{170}, (2004) 143--155.

\bibitem{LHFG}{\sc S.\@ Li, S.\@ Hu, T.\@ Feng, G.\@ Ge}. 
\textit{The weight distribution of a class of cyclic codes related to Hermitian forms graphs}. 
IEEE Trans.\@ Inform.\@ Theory \textbf{59:5}, (2013) 3064--3067.

\bibitem{LP}{\sc T.K.\@ Lim, C.\@ Praeger}. 
\textit{On Generalised Paley Graphs and their automorphism groups}.
Michigan Math.\@ J.\@ \textbf{58}, (2009) 294--308.

\bibitem{Mc}{\sc R.J.\@ McEliece}. 
\textit{Irreducible cyclic codes and Gauss sums}.
Combinatorics in: Proc. NATO Advanced Study Inst., Breukelen, 1974. Math. Centre Tracts 55, Math. Centrum, Amsterdam, 1974, 179--196.

\bibitem{PP}{\sc G.\@ Pearce, C.\@ Praeger}. 
\textit{Generalised Paley graphs with a product structure}. 
Ann.\@ Comb.\@ \textbf{23}, (2019) 171--182.
	
\bibitem{PV2}\textsc{R.A.\@ Podest\'a, D.E.\@ Videla}.
\textit{Weight distribution of cyclic codes defined by quadratic forms and related curves}.
Rev.\@ Uni\'on Mat.\@ Argent.\@ \textbf{62:1}, (2021) 219--242.

\bibitem{PV6} \textsc{R.A.\@ Podest\'a, D.E.\@ Videla}.
\textit{The Waring's problem over finite fields through generalized Paley graphs}.
Discrete Math.\@ \textbf{344}, (2021) 112324.

\bibitem{PV7} \textsc{R.A.\@ Podest\'a, D.E.\@ Videla}.
\textit{A reduction formula for Waring numbers through generalized Paley graphs}.
J.\@ Algebr.\@ Comb.\@ \textbf{56:4}, (2022) 1255--1285.

\bibitem{PV4} \textsc{R.A.\@ Podest\'a, D.E.\@ Videla}.
\textit{The weight distribution of irreducible cyclic codes associated with decomposable generalized Paley graphs}.
Adv.\@ Math.\@ Comm.\@  \textbf{17:2}, (2023) 446--464. 

\bibitem{PV3}\textsc{R.A.\@ Podest\'a, D.E.\@ Videla}.
\textit{Generalized Paley graphs equienergetic with their complements}.
Linear Multilinear Algebra \textbf{72:3}, (2024) 488--515.  

\bibitem{PV1}\textsc{R.A.\@ Podest\'a, D.E.\@ Videla}.
\textit{Spectral properties of generalized Paley graphs of $(q^\ell+1)$-th powers and applications}.
Discrete Math.\@ Algorithms Appl.\@ \textbf{17:4}, (2025) 2450056 

\bibitem{PV8} \textsc{R.A.\@ Podest\'a, D.E.\@ Videla}.
\textit{Spectral properties of generalized Paley graphs}. 
Australas.\@ J.\@ Comb.\@ \textbf{91:3}, (2025) 326--365. 

\bibitem{PV19} \textsc{R.A.\@ Podest\'a, D.E.\@ Videla}.
\textit{The nature of the spectrum of generalized Paley graphs}. 
(2026), \href{https://doi.org/10.48550/arXiv.2604.06513}{arXiv.2604.06513}.

\bibitem{RP}\textsc{A.\@ Rao, N.\@ Pinnawala}. 
\textit{A family of two-weight irreducible cyclic codes}.
IEEE Trans.\@ Inform.\@ Theory \textbf{56:6}, (2010) 2568--2570.

\bibitem{SL}{\sc P.\@ Seneviratne, J.\@ Limbupasiriporn}. 
\textit{Permutation decoding from generalized Paley graphs}. 
Appl.\@ Algebra in Eng.\@ Comm.\@ and Computing \textbf{24}, (2013) 225--236.

\bibitem{SW}
\textsc{B.\@ Schmidt, C.\@ White}.
\textit{All two weight irreducible cyclic codes?}
Finite Fields Appl.\@ \textbf{8}, (2002) 1--17.

\bibitem{vLSch}
\textsc{J.\@H.\@ van Lint, A.\@ Schrijver}.
\textit{Construction of strongly regular graphs, two-weight codes and partial geometries by finite fields}. 
Combinatorica \textbf{1:1}, (1981) 63--73. 

\bibitem{V}{\sc G.\@ Vega}. 
\textit{A critical review and some remarks about one- and two-weight irreducible cyclic codes}.
Finite Fields Appl.\@ \textbf{33}, (2015) 1--13.


\bibitem{Y1} \textsc{C.H.\@ Yip}.
\textit{On the directions determined by Cartesian products and the clique number of generalized Paley graphs}.
Integers.\@ \textbf{21}, (2021) Paper A51 .	

\bibitem{Y2} \textsc{C.H.\@ Yip}.
\textit{On the clique number of Paley graphs of prime power order}.
Finite Fields App.\@ \textbf{77}, (2022) 101930.

\bibitem{ZZDX} {\sc  Z.\@ Zhou, A.\@ Zhang, C.\@ Ding, M.\@ Xiong}. 
\textit{The weight enumerator of three families of cyclic codes}. 
IEEE Trans.\@ Inform.\@ Theory \textbf{59:9}, (2013) 6002--6009.
\end{thebibliography}
\end{document}